\newtheorem{problem}{Problem}
\newtheorem{theorem}{Theorem}
\newtheorem{lemma}{Lemma}
\newtheorem{definition}{Definition}
\newcommand{\set}[1]{\ensuremath{\left\{#1\right\}}}
\DeclareMathOperator{\child}{children}
\newcommand{\coco}{count-of-counts\xspace}
\newcommand{\Coco}{Count-of-counts\xspace}
\newcommand{\changes}[1]{\textcolor{black}{#1}}
\begin{document}


\title{Differentially Private Hierarchical Count-of-Counts Histograms}



%
%
%
%

\numberofauthors{1} 

\author{
\alignauthor
\hspace{-6pt}Yu-Hsuan Kuo, Cho-Chun Chiu, Daniel Kifer$^\dagger$, 
Michael Hay$^\ddagger$, Ashwin Machanavajjhala$^\star$\\
\affaddr{Penn State University, $^\dagger$ Penn State University and U.S. Census Bureau,}\\
\affaddr{$^\ddagger$ Colgate University, $^\star$ Duke University}
\email{ \{yzk5145,dkifer\}@cse.psu.edu, cuc496@psu.edu, \\
mhay@colgate.edu, ashwin@cs.duke.edu}
%
%
}

\maketitle


\begin{abstract}
\changes{
We consider the problem of privately releasing a class of queries that we call \emph{hierarchical \coco histograms}. \Coco histograms partition the rows of an input table into groups (e.g., group of people in the same household), and for every integer $j$ report the number of groups of size $j$. Hierarchical \coco queries report \coco histograms at different granularities as per hierarchy defined on an attribute in the input data (e.g., geographical location of a household at the national, state and county levels).} In this paper, we introduce this problem, along with appropriate error metrics and propose a differentially private solution that generates \changes{\coco} histograms that are consistent across all levels of the hierarchy.
\end{abstract}

\section{Introduction}\label{sec:intro}

The publication of differentially private tables is an important area of study with applications
to statistical government agencies, like the U.S. Census Bureau, that collect and publish economic
and demographic data about the population. Most work has focused on ordinary histograms -- for example, generating counts of
how many people are in each combination of age, state, and business sector.

However, an important class of queries that has been under-studied are \changes{\emph{hierarchical \coco} histograms}, which are used to study the skewness of a distribution. The 2010 Decennial Census published 33 tables related to such queries \cite{sf1}, but these tables were truncated because formal privacy methods for protecting such tables did not exist. To get a \changes{\coco} histogram, \changes{one first aggregates records in a table \texttt{R} into groups (e.g., \texttt{A=SELECT groupid, COUNT(*) AS size FROM R GROUPBY groupid}) and then forms a histogram on the groups (\texttt{H=SELECT size, COUNT(*) FROM A GROUPBY size}}).
\changes{Thus $H$ can be treated as an array, where $H[i]$ is the number of groups of size $i$. When there is a hierarchical attribute associated with each group such as location, the goal is to estimate a histogram $H$ for every element in the hierarchy and enforce consistency: the sum of the histograms at the children equals the histogram at the parent.}
\changes{For example, consider a table \textbf{Persons}(person$\_$id, group$\_$id, location),  with hierarchy national/state/county on the location attribute. A hierarchical \coco histogram on this table would ask: for each geographic region (national, state, county) and every $j$, how many households (i.e. groups) in that region have $j$ people.}


\changes{Closely related to \coco histograms are}
unattributed histograms (also known as frequency lists) \cite{Hay2010Boosting,BDB16}, \changes{which are not hierarchical, and whose structure is not well-suited to express hierarchical constraints on a secondary attribute like location.} Unattributed histograms report the number of rows in the smallest group, followed by the number of rows in the second smallest group, etc., \changes{i.e. \texttt{$H_g$=SELECT COUNT(*) AS size FROM R GROUPBY groupid ORDERBY size}. One can convert unattributed histograms into \coco histograms and vice-versa, so differentially private unattributed histograms 
\cite{Hay2010Boosting,Lin2013Information} could be used to generate differentially private \coco histograms (and vice versa, see \Cref{subsec:unatt}).}  

However, the main focus of our work is \emph{hierarchies} \changes{in which every group (e.g., household) belongs to exactly one leaf (e.g., county).\footnote{\changes{Thus group$\_$id$\rightarrow$location is a functional dependency.}}
Such  hierarchies are natural and important for \coco histograms -- all of the queries in the 2010 Census were hierarchical \changes{(by location).} However, they are  not well supported by existing methods}. 
\changes{Currently, the only known way to get \emph{consistent} differentially private \coco or unattributed histograms at every level of a hierarchy is to estimate them only at the leaves and then aggregate them up the hierarchy. However, our experiments show this approach introduces high error at non-leaf nodes (like in other hierarchical problems \cite{Hay2010Boosting,Qardaji2013Hierarchical}).}

\newlength{\tmpcolumnsep}
\setlength{\tmpcolumnsep}{\columnsep}
\setlength{\columnsep}{10pt}
\newlength{\tmpintextsep}
\setlength{\tmpintextsep}{\intextsep}
\setlength{\intextsep}{4pt}

\changes{An alternative is to separately obtain  differentially private estimates at every node in the hierarchy and then postprocess them to be consistent (i.e. aggregating information from the children should produce the corresponding histogram at} 
\begin{wraptable}{r}[0pt]{0cm}
\begin{tabular}{|c|c|c|}\hline
\multicolumn{1}{|c}{\textbf{g$\_$id}} & \multicolumn{1}{c}{\textbf{size}} &\multicolumn{1}{c|}{\textbf{loc.}}\\\hline
1 & 4 & a\\
2 & 2 & b\\
3 & 1 & a\\
4 & 1 & b\\\hline
\end{tabular}
\end{wraptable}
\changes{ the parent). However, this is far from trivial. To see why, suppose we aggregate the Persons table by group id (g$\_$id) to obtain the following table. At the root, the \coco histogram is $H^{\text{top}}=[2, 1, 0, 1]$ (i.e., 2 groups of size one, 1 group of size two, 0 of size three, 1 group of size four) and the  corresponding unattributed histogram is $H_g^{\text{top}}=[1,1,2,4]$. At node $a$, the \coco histogram is $H^a=[1, 0, 0, 1]$ and unattributed histogram is $H_g^a=[1,4]$; at node $b$ they are $H^b=[1,1,0,0]$ and $H_g^b=[1,2]$, respectively.} 

\setlength{\intextsep}{\tmpintextsep}
\setlength{\columnsep}{\tmpcolumnsep}

\changes{The unattributed histogram is not additive: $H_g^{\text{top}}\neq H_g^a + H_g^b$, thus existing techniques for enforcing consistency in a hierarchy\cite{Hay2010Boosting,Qardaji2013Hierarchical} do not apply. The \coco histograms are additive, but consistency is still nontrivial for the following reasons.}
%
The standard approach is to formulate ``consistency'' as an optimization problem \cite{Hay2010Boosting}.\footnote{\changes{In this case, given differentially private estimates of \coco histograms, modify them as little as possible so that children add up to their parents.}} Generic solvers are slow due to the number of variables involved so fast specialized algorithms, like mean-consistency  \cite{Hay2010Boosting,Qardaji2013Hierarchical}, have been proposed. However, mean-consistency cannot solve our problem. First, it can produce negative and fractional answers, both of which are invalid query answers \changes{(for a full list of requirements, see Section \ref{sec:notation})}.\footnote{Negative answers can be obtained from the final step of the mean-consistency algorithm that subtracts a constant from the counts at each node's children.} Second, it relies on an estimate of query variances that are difficult to obtain. Instead, we propose a different consistency algorithm based on an efficient and optimal matching of groups at different levels of the hierarchy. 

Aside from consistency, there are other challenges for differentially private hierarchical \changes{\coco} queries. One such challenge is properly designing the error metric (this is true even without a hierarchy). Standard measure like $L_1$ or $L_2$ (sum-squared error) distance between the true \changes{\coco} histogram $H$ and the differentially private histogram $\widehat{H}$ are inapplicable. For instance, suppose the true data $H$ had 20 groups and all had size 1. Consider two different estimates. Estimate $\widehat{H}_A$ has 20 groups, all having size 2. Estimate $\widehat{H}_B$ has 20 groups all having size 10. $\widehat{H}_A$ and $\widehat{H}_B$ both have the same $L_1$ error and $L_2$ error, but clearly $\widehat{H}_A$ is better than $\widehat{H}_B$ because its groups are closer in size to $H$. Earth-mover's distance \cite{emd} is more appropriate and turns out to be efficient to compute for our problem.

Another challenge is obtain\changes{ing} the initial differentially private \coco estimates at each node of the hierarchy (i.e. the initial estimates on which we will run consistency algorithms). Naively adding noise to each \changes{element of $H$} results in poor performance. However, we show that some unattributed histogram algorithms \cite{Hay2010Boosting,Lin2013Information} can be adapted for this task. \changes{Nevertheless, most of the time a different approach based on \coco histograms works better empirically.}\footnote{
\changes{Of course, one could also use generic tools like Pythia \cite{Kotsogiannis2017:Pythia} or the technique of Chaudhuri et al. \cite{Chaudhuri2011ERMJMLR} for selecting between the two approaches for estimating count-of-count histograms at each level in the hierarchy.}}
%
To summarize, our contributions are:
\begin{itemize}[noitemsep,leftmargin=5mm]
\item We introduce the \emph{hierarchical} differentially private \changes{\coco} histogram problem.
\item We propose \changes{new and accurate} algorithms for the non-hierarchical version of this problem.
\item For the hierarchical version, we propose algorithms that force consistency between estimates at different levels of the hierarchy.
\item An evaluation on a combination of real and partially-synthetic data validates our approach. The partially synthetic data are used to extend a real dataset, in which group sizes were truncated at a small value because, at the time, it was not known how to publish full results while protecting privacy.
\end{itemize}

This paper is organized as follows. We review related work in Section \ref{sec:related}. We formally define the problem and notation in Section \ref{sec:notation}. We propose algorithms for the non-hierarchical version of the problem in Section \ref{sec:nonhier}. We show how to obtain consistency in the hierarchical version in Section \ref{sec:consistency}. We present experiments in Section \ref{sec:experiments}. We discuss conclusions and future work in Section \ref{sec:future}.

\section{Related Work}\label{sec:related}

Differentially private histograms have been the main focus of private query answering algorithms.
Starting with the simplest application, adding Laplace noise to each count \cite{DMNS06Calibrating}, these methods have progressively become more sophisticated and data aware to the point where they take advantage of structure in the data (such as clusters) to improve accuracy (eg., \cite{Acs2012Histogram,Li2014DAWA,Zhang2014AHP,Xu2012Histogram}). Extensions for optimizing various count queries have also been proposed \cite{Li2014DAWA,Li2010MM,Hardt2012MWEM}.


Ordinary histograms have also been extended with hierarchies. The addition of hierarchies makes data release problems more challenging as well as more applicable to real-world uses. One of the earliest examples of hierarchical histograms was introduced by Hay et al. \cite{Hay2010Boosting}, in which the data consist of a one-dimensional histogram that is converted to a tree, where each node represents a range  and stores one value (count of the data points in that range) and the value at a node must equal the sum of the values stored by the children. Qardaji et al. \cite{Qardaji2013Hierarchical} determined a method to compute the fanout of the tree that is approximately optimal for answering range queries. Other, more flexible, partitioning of the data have also been studied (e.g., \cite{Zhang2016PrivTree,Cormode2011Spatial,Xiao2010Partitioning}). Ding et al. \cite{Ding2011DPCube} provided extensions to lattices in order to answer data cube queries instead of just range queries. In our setting, the hierarchies are already provided and the goal is to get consistent \changes{\coco} histograms at each level of the hierarchy. The algorithms used for consistency in ordinary histograms do not satisfy the requirements of \changes{\coco} histograms (as explained in Sections \ref{sec:notation} and \ref{sec:consistency}), and so new consistency algorithms are needed.

The work most closely related to our problem are unattributed histograms \cite{Hay2010Boosting,BDB16} which are often used to study degree sequences in social networks \cite{Lin2013Information,Karwa2016Inference}. Unattributed histograms are the duals of \changes{\coco} queries (they count people rather than groups) and can be used to answer queries such as ``what is the size of the $k^\text{th}$ largest group?'' They are much more accurate than the naive strategy of adding noise to each group and selecting the $k^\text{th}$ largest noisy group \cite{Hay2010Boosting}. Although unattributed histograms do not have a hierarchical component, \changes{our techniques solve the hierarchical version of this problem because \coco histograms can be converted to unattributed histograms}. 

\section{Problem Definition}\label{sec:notation}
Consider a database $D$ consisting of these 3 tables: Entities(entity\_id, group\_id), Groups(group\_id, region\_id), and Hierarchy(region\_id, level$_0$, level$_1$,\dots, level$_{L}$), where the entity\_id and group\_id are randomly generated unique numbers.
 Every entity belongs to a group and every group is in a region. 
Regions are organized into a hierarchy (as encoded by the table Hierarchy), where level $0$ is the root containing all regions, level 1 subdivides level 0 into a disjoint set of subregions, and, recursively, level $i+1$ subdivides the regions in level $i$. For example, level $0$ can be an entire country, level $1$ can be the set of states, and level $2$ can be the set of counties. We let $\Gamma$ represent the hierarchy and we will use $\tau$ to denote a node in the hierarchy. For each region $\ell$, we let Level$_j(\ell)$ denote its ancestor in level $j$. For example, if $\ell$ is a region corresponding to Fairfax County, then Level$_1(\ell)=$``Virginia'' and Level$_2(\ell)=$``Fairfax County.''
We add the restriction that a group cannot span multiple leaves of the region hierarchy (i.e. each group is completely within the boundaries of a leaf node).

We now consider what information is public and what information is private.
\begin{itemize}[noitemsep,leftmargin=5mm]
\item Hierarchy -- this table only defines region boundaries and so is considered \underline{public}.
\item Groups -- the group id is a random number so the only information this table provides is how many groups are in each region. We consider this table to be \underline{public} to be consistent with real world applications such as at the U.S. Census Bureau, where the number of households and group quarters facilities in each Census block is assumed (by the Bureau) to be public knowledge because it is easy to obtain by inspection.\footnote{If one wishes to make the Groups table private, our methods can be extended. The most straightforward approach is to first estimate the number of groups in each region by adding Laplace noise to each count. These estimates can be made consistent by solving a nonnegative least squares optimization problem. Since there is only one number per region, it is a relatively small problem that can be solved with off-the-shelf optimizers. Once the counts are generated they can be used with our algorithm.}
\item Entities -- this table contains information about which entities (e.g., people) are in the same group. We treat it as \underline{private}.
\end{itemize}
For example, in Census data, groups can be housing facilities (households and group quarters) and entities are people. In taxi data, groups could correspond to taxis and entities to pick-ups of passengers.

Each node $\tau$ in the hierarchy $\Gamma$ has an associated group-size histogram $\tau.H$ (or simply $H$ when $\tau$ is understood from the context). $\tau.H[i]$ is the number of groups, in the region associated with $\tau$, that have $i$ entities in them. We also use $\tau.G$ to represent the (public) number of groups in $\tau$ since that can be derived from the public Groups table.
There are two other convenient representations of the \changes{\coco} histogram $\tau.H$:
\begin{itemize}[noitemsep,leftmargin=5mm]
\item $\mathbf{\tau.H_c}$: This is the \emph{cumulative sum histogram}, defined as $\tau.H_c[i]=\sum_{j=0}^i \tau.H[j]$, which is the number of groups of size less than or equal to $i$. Note that the last element of $\tau.H_c$ is therefore $\tau.G$ (the total number of groups in $\tau$). For example, if $\tau.H=[0,2,1,2]$ then $\tau.H_c=[0, 2, 3, 5]$.
\item $\mathbf{\tau.H_g}$: This is the \emph{\changes{unattributed} histogram}. $\tau.H_g[i]$ is the size of the $i^\text{th}$ smallest group in $\tau$. Thus the dimensionality of $\tau.H_g$ is $\tau.G$. Note that $\tau.H_g$ is an unattributed histogram in the terminology of Hay et al. \cite{Hay2010Boosting}. For example, if $\tau.H=[0,2,1,2]$ then $\tau.H_g=[1,1,2,3,3]$ (since there are two groups of size 1, one of size 2, and two of size 3).
\end{itemize}

The conversion in representation from $H$ to $H_g$ to $H_c$ (and vice versa) is straightforward and is omitted.

The privacy definition we will be using is $\epsilon$-differential privacy \cite{DMNS06Calibrating} applied at the entity level. More specifically:
\begin{definition}[Differential Privacy]
Given a privacy loss budget $\epsilon>0$, a mechanism $M$ satisfies $\epsilon$-differential privacy if, for any pair of  databases $D_1$, $D_2$ that contain the public Hierarchy and Groups tables, and differ by the presence or absence of one record in the Entities table, and for any possible set $S$ of outputs of $M$, the following is true:
$$P(M(D_1)\in S) \leq e^\epsilon P(M(D_2)\in S)$$
\end{definition}

Thus the differentially private hierarchical \changes{\coco} problem can be defined as:
\begin{problem}
Let $\Gamma$ be a hierarchy such that each node $\tau\in \Gamma$ has an associated \changes{\coco} histogram $\tau.H$. Develop an algorithm to release a set of estimates $\set{\tau.\widehat{H}~:~\tau\in \Gamma}$ while satisfying $\epsilon$-differential privacy along with the following desiderata:
\begin{itemize}[itemsep=0em,leftmargin=5mm]
\item \textbf{[Integrality]}: $\tau.\widehat{H}[i]$ is an integer for all $i$ and $\tau\in\Gamma$.
\item \textbf{[Nonnegativity]}: $\tau.\widehat{H}[i]\geq 0$ for all $i$ and $\tau\in\Gamma$.
\item \textbf{[Group Size]}: $\sum_i \tau.\widehat{H}[i] = \tau.G$ for all $\tau\in \Gamma$.
\item \textbf{[Consistency]}: $\tau.\widehat{H}[i] = \hspace{-0.3cm}\sum\limits_{c\in \child(\tau)} \hspace{-0.45cm} c.\widehat{H}[i]$ for any $\tau\in\Gamma$, $i$. 
\end{itemize}
\end{problem}
These constraints ensure that the \changes{\coco} histograms satisfy all publicly known properties of the original data.

\subsection{Error Measure}\label{subsec:error}

The error measure is an important aspect of the problem as we would like to quantify the ``distance'' between $\tau.H$ and $\tau.\widehat{H}$. 
Ideally, we would like to measure this distance as the minimum number of people that must be added or removed from groups in $\tau.H$ to get $\tau.\widehat{H}$.

The standard error measures of Manhattan distance $||\tau.H - \tau.\widehat{H}||_1$ or sum-squared error $||\tau.H - \tau.\widehat{H}||_2^2$ do not capture this distance measure. To see why, suppose $H=[0, 100, 0,$ $ 0, 0, 0]$ meaning that all 100 groups have size $1$. Consider two estimates $\widehat{H}_1=[0, 0, 100, 0, 0, 0, 0]$ (where all groups have size 2) and $\widehat{H}_2=[0, 0, 0, 0, 0, 100]$ (where all groups have size 5). We see that $||H - \widehat{H}_1||_1=||H-\widehat{H}_2||_1=200$ and $||H - \widehat{H}_1||_2^2=||H-\widehat{H}_2||_2^2=20,000$. However,  we should consider $\widehat{H}_1$ to be closer to $H$ than $\widehat{H}_2$ is. The reason is that if we add one extra person into each group in $H_1$, we would obtain $\widehat{H}_1$. On the other hand, to obtain $\widehat{H}_2$, we would need to add 4 people to each group.

Thus, the appropriate way to measure distance between $H$ and $\widehat{H}$ is the Earthmover's distance (emd) as it precisely captures the number of people that must be added to or removed from groups in $H$ in order to obtain $\widehat{H}$. Normally, computing emd is not linear in the size of an array \cite{emd}. However, it can be computed in linear time for our problem by using the cumulative histograms.

\begin{lemma}[\cite{tcloseness}]\label{lem:error}
The earthmover's distance between $H$ and $\widehat{H}$ can be computed as $||H_c - \widehat{H}_c||_1$, where $H_c$ (resp., $\widehat{H}_c$) is the cumulative histogram of $H$ (resp., $\widehat{H}$). It is the same as the $L_1$ norm in the $H_g$ representation when the number of groups is fixed.
\end{lemma}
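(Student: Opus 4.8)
The statement bundles two equalities: first, that the earthmover's distance $\mathrm{emd}(H,\widehat H)$ equals $\|H_c-\widehat H_c\|_1$, and second, that this in turn equals $\|H_g-\widehat H_g\|_1$ whenever $H$ and $\widehat H$ describe the same number of groups $G$. The plan is to prove the two equalities separately. The first is the classical one-dimensional optimal-transport identity (the cited source is \cite{tcloseness}), so I would reconstruct it via a boundary-crossing argument; the second is a conjugate-partition / Fubini duality between the ``horizontal'' ($H_c$) and ``vertical'' ($H_g$) descriptions of the same staircase.

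For the first equality, I would model the emd as a transportation problem in which each group is a unit of mass located at its size, the mass profile of $H$ is $\{H[i]\}_i$, that of $\widehat H$ is $\{\widehat H[i]\}_i$, and moving one group from size $i$ to size $j$ costs $|i-j|$ --- exactly the number of people added to or removed from that group. Since $\sum_i H[i]=\sum_i\widehat H[i]=G$, the total masses agree and the problem is feasible. The key observation is to rewrite the ground cost as a sum over unit ``boundaries'': transporting a group from $i$ to $j$ crosses every boundary strictly between $\min(i,j)$ and $\max(i,j)$, contributing $1$ per boundary, so the total transport cost equals the total number of (group, boundary) crossings. For the boundary between sizes $i$ and $i+1$, mass conservation forces the net number of crossings to be at least $|H_c[i]-\widehat H_c[i]|$, because $H_c[i]$ (resp.\ $\widehat H_c[i]$) is exactly the mass lying at or below $i$ in $H$ (resp.\ $\widehat H$), and any discrepancy must migrate across that boundary. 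Summing over $i$ gives the lower bound $\mathrm{emd}(H,\widehat H)\ge\|H_c-\widehat H_c\|_1$. For tightness I would exhibit the monotone coupling --- equivalently, the greedy matching that pairs the $k$-th smallest group of $H$ with the $k$-th smallest group of $\widehat H$ --- and verify that it crosses each boundary exactly $|H_c[i]-\widehat H_c[i]|$ times, so that its cost telescopes to $\|H_c-\widehat H_c\|_1$ and meets the lower bound.

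For the second equality I would count the lattice cells of the symmetric difference of the two ``staircase'' regions $R_H=\{(k,s):1\le k\le G,\ 1\le s\le H_g[k]\}$ and $R_{\widehat H}$ in two ways. Counting by rows (fixing the group index $k$) immediately gives $\sum_{k=1}^{G}|H_g[k]-\widehat H_g[k]|=\|H_g-\widehat H_g\|_1$, which only requires that both sequences have length $G$ --- this is exactly where the fixed-group-count hypothesis enters. Counting by columns (fixing a size $s$), the cells of $R_H$ are the indices $k$ with $H_g[k]\ge s$; because $H_g$ is sorted, this index set is a \emph{suffix} of $\{1,\dots,G\}$ of cardinality $G-H_c[s-1]$, and likewise for $\widehat H$. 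Two suffixes are always nested, so their symmetric difference has size $|H_c[s-1]-\widehat H_c[s-1]|$, and summing over $s$ recovers $\|H_c-\widehat H_c\|_1$. Equating the two cell counts yields the claim. \textbf{The main obstacle} I anticipate is not the lower bound but establishing tightness in the first equality: one must check that a \emph{single} transport plan attains the per-boundary minimum at \emph{every} boundary simultaneously, which is what the monotonicity of the greedy coupling buys us; the analogous monotonicity (suffix-nesting) is also precisely what makes the column count in the second equality collapse to an absolute difference.
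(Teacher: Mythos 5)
Your proof is correct. Note that the paper itself offers no proof of this lemma --- it is stated as an imported result, cited to the $t$-closeness paper, which derives the same cumulative-difference formula for EMD under an equally-spaced ordered ground distance. Your reconstruction of the first equality is the standard one-dimensional optimal-transport identity ($W_1$ equals the $L_1$ distance between CDFs), and both halves of your argument go through: the boundary-crossing lower bound is forced by mass conservation at each cut, and the monotone coupling is tight because $\{k: H_g[k]\le i\}$ and $\{k:\widehat H_g[k]\le i\}$ are nested prefixes, so the coupling crosses boundary $i$ exactly $|H_c[i]-\widehat H_c[i]|$ times. (One could also observe that the integral assignment attaining the optimum exists by integrality of the transportation polytope, but your explicit coupling makes that unnecessary.) Your second equality via counting the symmetric difference of the two staircase regions by rows and by columns is a clean conjugate-partition argument, and you correctly isolate where the fixed-group-count hypothesis is used --- both in giving the row sums length $G$ and in making the column sets suffixes of a common index set $\{1,\dots,G\}$ so that their symmetric difference collapses to $|H_c[s-1]-\widehat H_c[s-1]|$. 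Since the paper supplies no argument to compare against, your write-up is a valid, self-contained substitute for the citation.
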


Our algorithms optimize error according to this metric.

\subsection{Privacy Primitives}
We now describe the privacy primitives which serve as building blocks of our algorithm. One important concept is the sensitivity of a query.
\begin{definition}
Given a query $q$ (which outputs a vector), the global sensitivity of $q$, denoted by $\Delta(q)$ is defined as:
$$\Delta(q)=\max_{D_1,D_2}||q(D_1)-q(D_2)||_1,$$
where the maximum is taken over all databases $D_1, D_2$ that contain the public Hierarchy and Groups tables, and differ by the presence or absence of one record in the Entities table. 
\end{definition}

The sensitivity is used to calibrate the scale of noise needed to achieve differential privacy. We can use the Geometric Mechanism \cite{Ghosh2009Universal} instead of the Laplace Mechanism \cite{DMNS06Calibrating}, because we want our final counts to be integers. The Geometric mechanism is also preferable to the Laplace mechanism as it has lower variance, and is not susceptible to side-channel attacks when implemented in floating point arithmetic \cite{mironov12}. 

\begin{definition}[Geometric Mechanism \cite{Ghosh2009Universal}]\label{def:geomech}

Given a \\ 
database $D$, a query $q$ that outputs a vector, and a privacy loss budget $\epsilon$, the geometric mechanism adds independent noise to each component of $q(D)$ using the following distribution: $P(X=k) = \frac{1-e^{-\epsilon}}{1+e^{-\epsilon}} e^{-\epsilon |k|/\Delta(q)}$ (for $k=0, \pm 1, \pm 2,$ etc.). This distribution is known as the double-geometric with scale $\Delta(q)/\epsilon$.
\end{definition}

\begin{lemma}[\cite{DMNS06Calibrating,Ghosh2009Universal}] The Geometric Mechanism satisfies $\epsilon$-differential privacy.
\end{lemma}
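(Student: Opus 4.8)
The plan is to directly bound, for any fixed pair of neighboring databases $D_1, D_2$ (differing by one Entities record) and any possible output, the ratio of the probabilities that the mechanism produces that output. Since the geometric mechanism adds \emph{independent} integer-valued noise to each coordinate of $q(D)$, I would work with probability mass functions rather than continuous densities, which avoids any measure-theoretic subtlety. Write $q(D_1) = (a_1, \dots, a_d)$ and $q(D_2) = (b_1, \dots, b_d)$, and let $z = (z_1, \dots, z_d)$ be an arbitrary integer output vector. By independence across coordinates, $P(M(D_1) = z)$ factors as a product over the $d$ coordinates whose $i$th factor is $\frac{1-e^{-\epsilon}}{1+e^{-\epsilon}} e^{-\epsilon |z_i - a_i|/\Delta(q)}$, and analogously for $D_2$ with $b_i$ replacing $a_i$.

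Next I would form the ratio $P(M(D_1)=z)/P(M(D_2)=z)$. The normalizing constant $\frac{1-e^{-\epsilon}}{1+e^{-\epsilon}}$ appears identically in every factor of both numerator and denominator, so it cancels, leaving $\prod_{i=1}^{d} e^{-\epsilon(|z_i - a_i| - |z_i - b_i|)/\Delta(q)}$. The crucial step is the reverse triangle inequality applied coordinatewise, $|z_i - b_i| - |z_i - a_i| \le |a_i - b_i|$, which upper-bounds each exponent. Substituting gives $P(M(D_1)=z)/P(M(D_2)=z) \le \exp\!\left(\tfrac{\epsilon}{\Delta(q)} \sum_{i=1}^{d} |a_i - b_i|\right) = \exp\!\left(\tfrac{\epsilon}{\Delta(q)} \|q(D_1) - q(D_2)\|_1\right)$.

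Finally I would invoke the definition of global sensitivity: because $D_1$ and $D_2$ are neighbors of the required form, $\|q(D_1) - q(D_2)\|_1 \le \Delta(q)$, so the ratio is at most $e^{\epsilon}$. Since this pointwise bound holds for \emph{every} integer output $z$, summing the bound over any set $S$ of outputs yields $P(M(D_1) \in S) \le e^{\epsilon} P(M(D_2) \in S)$, which is exactly the definition of $\epsilon$-differential privacy. I expect the only genuinely delicate points to be the cancellation of the shared normalizer (so the per-coordinate scale factors do not interfere) and getting the direction of the reverse triangle inequality right; once those are in place, the sensitivity bound does the remaining work, and the discreteness of the output space means no continuity argument is needed.
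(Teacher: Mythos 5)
Your proof is correct and is exactly the standard argument from the cited references (the paper itself states this lemma with citations to \cite{DMNS06Calibrating,Ghosh2009Universal} and gives no proof of its own): factor the mass function by independence, cancel the normalizer, apply the triangle inequality coordinatewise, and invoke the $L_1$ sensitivity bound before summing over $S$. The only incidental observation worth making is that the normalizing constant in the paper's Definition~3 is written as $\frac{1-e^{-\epsilon}}{1+e^{-\epsilon}}$ rather than $\frac{1-e^{-\epsilon/\Delta(q)}}{1+e^{-\epsilon/\Delta(q)}}$, but as you note this constant cancels in the ratio and has no effect on the privacy guarantee.
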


\section{The Base Case: Non-hierarchical Count-of-Counts}\label{sec:nonhier}
In order to create consistent estimates of $\tau.H$ for all nodes $\tau$ in the hierarchy $\Gamma$, we first create estimates of $\tau.H$ independently and then post-process them for consistency. In this section, we discuss how to generate a differentially private estimate $\tau.\widehat{H}$ for a single node (i.e. temporarily ignoring the hierarchy). In Section \ref{sec:consistency}, we  show how to combine the estimates $\tau.\widehat{H}$ (for all $\tau\in \Gamma$) to satisfy consistency. 

In the rest of this section, we focus on a single node in $\Gamma$, so to simplify notation we use the notation $H$ instead of $\tau.H$. We first discuss a naive strategy for estimating $\widehat{H}$, along with 2 more robust strategies. 

\subsection{Naive Strategy}\label{subsec:naive}
The naive strategy for estimating $\widehat{H}$ is to use Definition \ref{def:geomech} and add double-geometric noise with scale $2/\epsilon$ to each cell of $H$. However, because the maximum group size is not known, the length of $H$ is private. Thus we must determine a maximum non-private size $K$ and make the following modifications to $H$. If all groups in $H$ have less than $K$ people, then $H$ is extended with 0's until its length is $K+1$. If some groups of $H$ have size more than $K$, we change the sizes of those groups to $K$. Call the resulting histogram $H^\prime$. We then add double-geometric noise with scale $2/\epsilon$ to each cell of $H^\prime$.
This strategy satisfies $\epsilon$-differential privacy because the sensitivity of $H^\prime$ is 2:
\begin{lemma}
The global sensitivity of $H^\prime$ is 2.
\end{lemma}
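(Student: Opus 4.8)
The plan is to compute the global sensitivity directly from its definition by a case analysis on the effect of a single-entity change. By Definition of $\Delta(\cdot)$, it suffices to consider two databases $D_1,D_2$ that share the public Hierarchy and Groups tables and differ by exactly one record in Entities; without loss of generality suppose $D_2$ adds one entity $e$ to $D_1$, and let $g$ be the unique group to which $e$ belongs, with size $s$ in $D_1$ and $s+1$ in $D_2$. The crucial structural observation is that, because the Groups table is public and therefore \emph{identical} in $D_1$ and $D_2$, the total number of groups is fixed: adding $e$ neither creates nor destroys a group, it merely moves the single group $g$ from one size class to an adjacent one. Hence the change to $H'$ is entirely local to $g$, and no ``create/destroy a bucket'' effects can inflate the difference.

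Next I would split into cases according to where $s$ falls relative to the truncation threshold $K$. If $s<K$, then $g$ is counted in bucket $s$ under $D_1$ and in bucket $s+1$ (still a valid index, since $s+1\le K$) under $D_2$; thus $H'[s]$ drops by one and $H'[s+1]$ rises by one while every other cell is unchanged, giving $||H'(D_1)-H'(D_2)||_1 = 2$. If $s\ge K$, then by the truncation rule $g$ is counted in the saturated bucket $K$ both before and after (its size is $\ge K$ in both databases), so $H'$ is literally unchanged and the $L_1$ difference is $0$. Removing an entity is symmetric and produces the same two cases; the only boundary worth noting, $s=1\mapsto s=0$, is harmless precisely because bucket $0$ exists (the group count is fixed, so a group may legitimately become empty). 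In every case $||H'(D_1)-H'(D_2)||_1\le 2$, which establishes $\Delta(H')\le 2$.

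Finally, to upgrade this to equality I would exhibit one instance attaining the bound: take any $D_1$ containing a group of size $1<K$ and let $D_2$ add one entity to that group; by the first case the $L_1$ difference is exactly $2$, so $\Delta(H')\ge 2$, and therefore $\Delta(H')=2$.

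I do not expect a genuine obstacle here. The only step requiring care is the treatment of the truncation boundary: one must verify that when $g$ already lives in (or crosses into) the saturated bucket $K$ the contribution is $0$ rather than $2$, and must invoke the fixed group total to rule out any spurious contribution from buckets appearing or disappearing. Everything else is a routine two-case computation.
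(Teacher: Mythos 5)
Your proposal is correct and follows essentially the same argument as the paper: a direct case analysis on whether the affected group's size falls below or at/above the truncation threshold $K$, showing the change is confined to two adjacent cells (total $L_1$ change $2$) or zero. The only additions beyond the paper's proof are the explicit tightness example and the remark that the public Groups table fixes the number of groups, both of which are harmless elaborations.
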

\begin{proof}
For any group size $i<K$, adding a person to a group of size $i$ decreases $H^\prime[i]$ by one and increases $H^\prime[i+1]$ by one. When $i\geq K$, there is no change to $H^\prime$. Similarly, removing a person from a non-empty group of size $i\leq K$ increases $H^\prime[i-1]$ by one and decreases $H^\prime[i]$ by one (for a total change of 2). If $i>K$,  there is no change to $H^\prime$. 
\end{proof}

Let $\widetilde{H}$ be the noisy version of $H^\prime$. The numbers in $\widetilde{H}$ can be negative. Thus we post-process $\widetilde{H}$ to obtain an estimate $\widehat{H}$ by solving the following problem (using a quadratic program solver):
\begin{align*}
\widehat{H} &=\arg\min_{\widehat{H}} ||\widetilde{H} - \widehat{H}||_2^2\\
&\text{ s.t. }\widehat{H}[i]\geq 0 \text{ for all i}\quad \text{ and }\sum_i \widehat{H}[i] = G
\end{align*}
To get integers, we set $r = G-\sum_i \lfloor \widehat{H}[i] \rfloor$, round the cells with the $r$ largest fractional parts up, and round the rest down.


\changes{This approach has several weakness. First, there are many indexes $i$ where $H[i]=0$ and, after noise addition, many of them (a $\frac{2e^{-\epsilon}}{e^{-\epsilon}+1}$ fraction of indices) will have non-zero entries. Then roughly half of them end up with positive counts due to nonnegativity constraints. As a result, in our experiments, this method had several orders of magnitude worse error than the algorithms we describe next. Second, earthmover's distance is equivalent to measuring error between the cumulative sums of the estimate $\widehat{H}$ and the original data $H$: $\sum_{i} | \sum_{j\leq i} \widehat{H}[j] - \sum_{j\leq i} H[j]|$. Since noise is added to every cell independently, the $i^\text{th}$ component of the error depends on the sum of $i$ random variables (whose total variance is $O(i)$). Hence, if $H$ has $n$ components, one would expect the error to be $O(n^2)$.}

\subsection{Unattributed Histograms $H_g$}\label{subsec:unatt}

The next approach is to use algorithms for unattributed histograms \cite{Hay2010Boosting}. We can convert $H$ into the representation $H_g$, where $H_g[i]$ is the size of the $i^\text{th}$ smallest group. The length of $H_g$ is $G$, so potentially this can be a very large histogram. One of the properties of $H_g$ is that it is non-decreasing. Hence we can achieve $\epsilon$-differential privacy by adding independent double-geometric noise with scale $1/\epsilon$ to each element of $H_g$ to obtain $\widetilde{H}_g$, because the sensitivity of unattributed histograms is 1 \cite{Hay2010Boosting}. However, $\widetilde{H}_g$ is no longer non-decreasing (or even nonnegative), hence, following \cite{Hay2010Boosting,Lin2013Information}, we post-process it by solving the following optimization problem with either $p=1$ or $p=2$:
\begin{align*}
\widehat{H}_g &=\arg\min_{\widehat{H}_g} ||\widetilde{H}_g-\widehat{H}_g||_p^p\\
&\text{ s.t. } 0\leq \widehat{H}_g[i]\leq \widehat{H}_g[i+1] \text{ for }i=0,\dots, G-1
\end{align*}
Then we round each entry of $\widehat{H}_g$ to the nearest integer. Since the result was non-decreasing before rounding, it will remain non-decreasing after rounding.  Then from the resulting $\widehat{H}_g$ estimate, we convert it back to $\widehat{H}$ by counting, for each $i$, how many estimated groups have size $i$.

This optimization problem is known as \emph{isotonic regression}. When $p=2$ it can be solved in linear time using the min-max algorithm \cite{barlow1972isotonic}, pool-adjacent violators (PAV) \cite{barlowstatistical,robertson1968estimating}, or a commercial optimizer such as Gurobi \cite{gurobi}. When $p=1$ it can be converted to a linear program but it runs much slower. In our experiments,
we used $p=2$ because $H_g$ can have length in the hundreds of millions. For these sizes, the quadratic program is much faster to solve using PAV.

One observation we had is that this method is very good at estimating large group sizes, and most of its error comes from estimation errors of small group sizes (see Figure \ref{fig:abs_err}).

\subsection{Cumulative Histograms $H_c$}\label{subsec:cumu}
Another strategy is to use $H_c$, the cumulative sum of $H$. Since the cumulative sum is non-decreasing, we again can add noise and use isotonic regression. As in the naive case, we must determine a public upper bound\footnote{\changes{Recall $K$ is an upper bound on the maximum number of people in a group. This method is not very sensitive to K -- in the experiments we used $K=100,000$ on datasets where the largest group had around 10,000 people -- an order of magnitude difference and still the estimated size of the largest group ended up being around 10,000. Thus if we have no prior knowledge, we can estimate $K$ as follows. Set aside a small privacy budget, since $K$ does not need much accuracy (e.g., $\epsilon=10^{-4}$). Let $X$ be the number of people in the largest group. Estimate $K$ as $X+$Laplace$(1/\epsilon) + 5\frac{\sqrt{2}}{\epsilon}$ -- i.e. add 5 standard deviations to a noisy estimate of $X$ so that $P(K \geq X)> 0.9995$ }} $K$ on $H$ and modify $H$ appropriately (as in Section \ref{subsec:naive}) before computing the cumulative sum. We saw in Section \ref{subsec:error} that error in estimation of \changes{\coco histograms} is measured as the $L_1$ difference between cumulative size histograms. Thus it makes sense to privatize these histograms directly. 
\begin{lemma}\label{lem:hcsens}
The global sensitivity of $H_c$ is 1.
\end{lemma}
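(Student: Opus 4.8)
The plan is to bound $\Delta(H_c) = \max_{D_1,D_2}\|H_c(D_1) - H_c(D_2)\|_1$ directly, where $D_1,D_2$ contain the same public Hierarchy and Groups tables and differ by one entity. The starting point is the observation already used in the sensitivity proof for $H^\prime$: inserting or deleting a single entity changes the size of exactly one group by exactly one, say from $s$ to $s\pm 1$, and leaves every other group untouched. So I only need to track how one group's move between adjacent buckets propagates into the cumulative representation.

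The key step is to recall that $H_c[k]$ counts the groups of size at most $k$, so a group of size $s$ contributes $1$ to $H_c[k]$ precisely for the indices $k \geq s$ and $0$ for $k < s$. When the group's size changes from $s$ to $s+1$, its contribution pattern changes only at index $k=s$: it used to count there (since $s \leq s$) and now does not (since $s+1 > s$), while for every $k \neq s$ its contribution is identical before and after. Hence exactly one entry of $H_c$ changes, by one, giving $\|H_c(D_1)-H_c(D_2)\|_1 = 1$ in the untruncated case. This is exactly the ``telescoping'' that makes $H_c$ better behaved than $H^\prime$: the same move flips two cells of $H^\prime$ (sensitivity $2$), whereas cumulating cancels one of them.

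The part that needs care, and which I expect to be the only real obstacle, is the truncation at the public bound $K$. Since $H_c$ is formed from the clamped sizes $\min(\cdot,K)$, I would split into cases on the affected group's size: if the post-clamp size stays strictly below $K$ the argument above gives a change of exactly one bin; if the group already sits at or beyond the boundary, clamping absorbs the edit and no bin changes at all. Checking insertion ($s \to s+1$) and deletion ($s \to s-1$, including the $s=1 \to 0$ empty-group case) over these ranges confirms the per-edit $L_1$ change never exceeds $1$, so $\Delta(H_c) \leq 1$. Finally I would exhibit a single instance, e.g.\ a lone group whose size crosses an interior threshold, realizing a change of exactly $1$, which gives the matching lower bound and hence $\Delta(H_c)=1$.
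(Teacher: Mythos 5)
Your argument is correct and is essentially the paper's own proof: both track how moving one group between adjacent sizes affects the cumulative counts and observe that exactly one entry of $H_c$ changes by one. You are somewhat more careful than the paper in explicitly handling the truncation at $K$ and in exhibiting a lower-bound instance, but the core reasoning is identical.
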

\begin{proof}
Adding one person to a group of size $i$ means there is one less group of size $i$ and one more group of size $i+1$. Thus $H_c[i]$ decreases by one but $H_c[i+1]$ remains the same (i.e., number of groups $\leq i+1$ does not change). None of the other entries of $H_c$ change. Similarly, removing a person from a group of size $i$ means $H_c[i-1]$ increases by one and $H_c[i]$ does not change (and neither do any other entries). Thus the overall change is $1$.
\end{proof}
Thus we can satisfy differential privacy by adding independent double-geometric noise with scale $1/\epsilon$ to each cell of $H_c$ to get $\widetilde{H}_c$. We then postprocess $\widetilde{H}_c$ by solving the following optimization problem (using $p=1$ or $p=2$):
\begin{align*}
\widehat{H}_c &=\arg\min_{\widehat{H}_c} ||\widehat{H}_c - \widetilde{H}_c||_p^p\\
&\text{ s.t. } 0 \leq \widehat{H}_c[i] \leq \widehat{H}_c[i+1] \text{ for }i=0,\dots, K\\
& \text{ and }\widehat{H}[K]=G
\end{align*}
These problems can again be solved using PAV (in the case of $p=2$) or with a commercial optimizer (in the cases of $p=1$ or $p=2$).
In our experiments, we found that the $L_1$ version of the problem (with $p=1$) performs better than the $L_2$ version (with $p=2$). This is consistent with prior observations on unattributed histograms \cite{Lin2013Information}. A Bayesian post-processing is known to further reduce error, but we did not use it because it scales quadratically with the size of the histogram (the sizes of our histograms make this prohibitively expensive) \cite{Lin2013Information}. 
Finally, we then round $\widehat{H_c}$ to nearest integer and convert it back into a \coco histogram $\widehat{H}$. We found that the $L_1$ version of the problem mostly returns integers, so rounding is minimal.

Unlike the $H_g$ method, we found that this method is accurate for small group sizes but less accurate for large groups.

\begin{figure}[t]
	\centering
	\includegraphics[width=7.5cm]{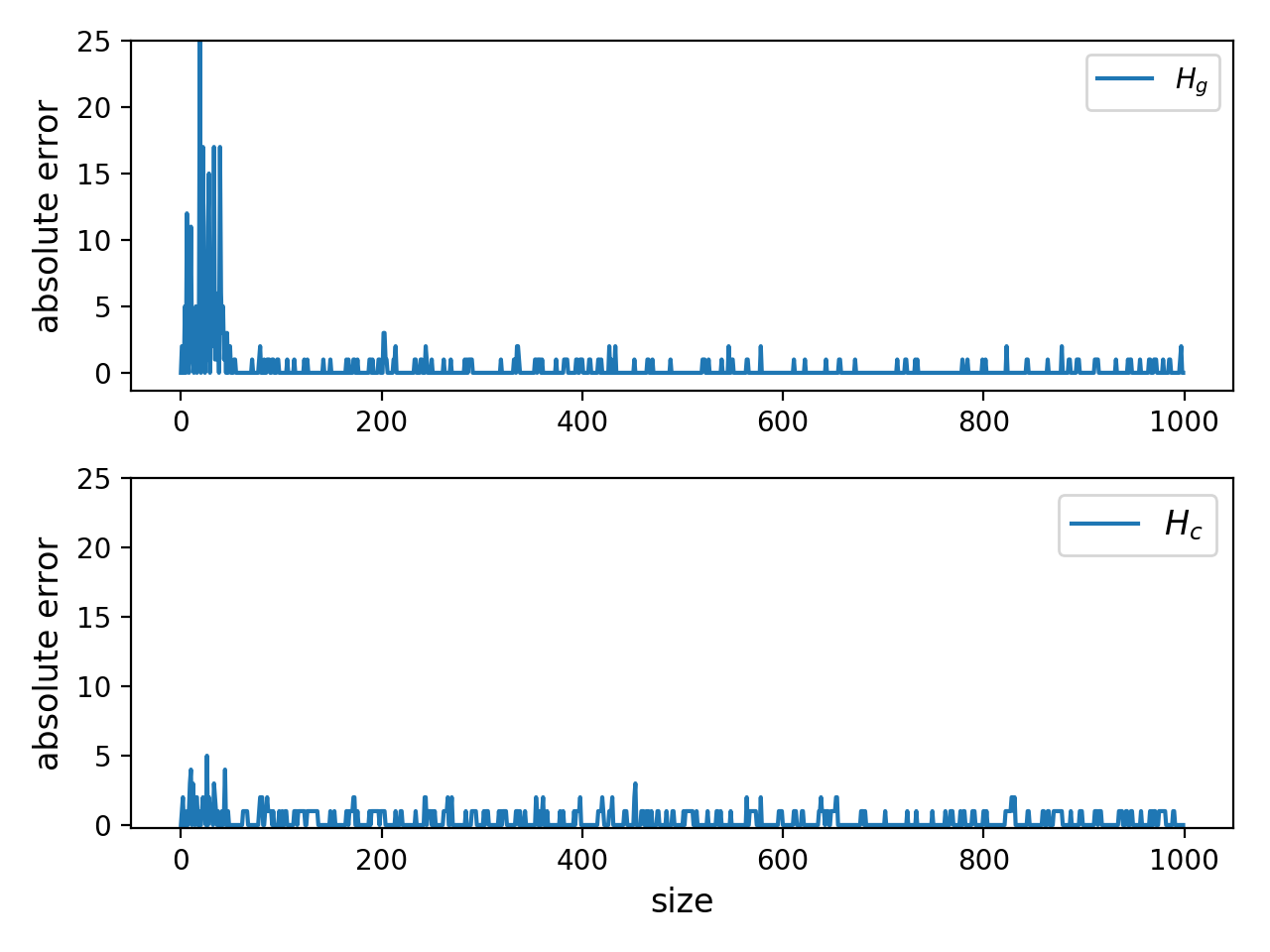}
\vspace{-9pt}
	\caption{Error visualization. $x$-axis: cumulative sum of group size counts. $y$-axis: estimation error at these sizes. Top: $H_g$ method (Section \ref{subsec:unatt}). Bottom $H_c$ method (\ref{subsec:cumu}). Errors from the $H_g$ method are concentrated around small group sizes. Errors from $H_c$ are more dense throughout the rest of the sizes.}
    \label{fig:abs_err}
\end{figure}

\section{Hierarchical Consistency}\label{sec:consistency}
Our overall algorithm for the hierarchical problem is shown in Algorithm \ref{alg:top-down}.
The hierarchy $\Gamma$ has $L+1$ levels (including the root), so we use $\epsilon/(L+1)$ budget for estimating  $\tau.\widehat{H}$ for each $\tau\in \Gamma$ (using either the $H_c$ method or $H_g$ method from Section \ref{sec:nonhier}). 
Then we run a consistency post-processing algorithm to obtain new histograms $\tau.\widehat{H}^\prime$ such that the histograms at each parent equals the sum of the histograms at its child nodes.

Typically, for a hierarchy $\Gamma$ with noisy data $\tau.\widehat{H}$ at each node, one would use the mean-consistency algorithm \cite{Hay2010Boosting} to get consistent estimates $\tau.\widehat{H}^\prime$ with the property that the sum of data at child nodes equals the data at the parent node. In the group-size estimation problem, this algorithm will not produce outputs satisfying the problem requirements (Section \ref{sec:notation}). The reason is the following. The mean-consistency algorithm solves the global optimization:
\begin{align*}
\set{\tau.\widehat{H}^\prime~:~\tau\in\Gamma} &=\arg\min_{\set{\tau.\widehat{H}^\prime~:~\tau\in\Gamma}} \sum\limits_{\tau\in\Gamma} ||\tau.\widehat{H}^\prime-\tau.\widehat{H}||_2^2\\
&\text{ subject to }\quad \tau.\widehat{H}^\prime =\sum_{c\in\child(\tau)} c.\widehat{H}^\prime
\end{align*}

This algorithm does not meet the necessary requirements for several reasons. First, it outputs real (and even negative\footnote{The solution given by mean-consistency can be negative even if all input numbers are positive. We verified this phenomenon using that algorithm and also with commercial optimizers. Intuitively, this happens because the mean-consistency algorithm \cite{Hay2010Boosting} has a subtraction step, in which a constant number is subtracted from each child so that they add up to the parent total. For children with small counts, this subtraction gives negative numbers.}) numbers while our problem requires the conditions that $\tau.\widehat{H}^\prime$ is an integer, is nonnegative, and $\sum_i \tau.\widehat{H}^\prime[i] = \tau.G$ (the private estimator must match the publicly known groups table).  Second, it requires knowledge of variances \cite{Qardaji2013Hierarchical} -- in our case, the variances of $\tau.\widehat{H}[i]$ for every $\tau$ and $i$. Not only is this variance  different for every $i$ and $\tau$, but it has no closed form (because of the isotonic regression used to generate $\tau.\widehat{H}$). 

Our proposed solution converts $\tau.\widehat{H}$ back into the \changes{unattributed} histogram $\tau.\widehat{H}_g$. It estimates the variance of the size of each group (Section \ref{subsec:initvar}); e.g., variance estimates for $\tau.\widehat{H}_g$, not $\tau.\widehat{H}$. It then finds a $1$-to-$1$ optimal matching between groups at the child nodes and groups at the parent node (Section \ref{subsec:match}). This means that each group has a size estimate from the child and a size estimate from the parent. It merges those two estimates (Section \ref{subsec:merge}). The result is a consistent set of estimates $\tau.\widehat{H}^\prime$ that satisfy all of the constraints in our problem. 
The overall approach that puts these pieces together is shown in Algorithm \ref{alg:top-down}, while the specifics are discussed next.

\begin{algorithm}[t]
\caption{Top-down Consistency}
\label{alg:top-down}
\SetNoFillComment
\DontPrintSemicolon
\KwIn{privacy loss budget $\epsilon$, Hierarchy $\Gamma$ with root at level 0}
\For{$\ell=0,\dots, L$\label{line:tdlevel}} {
   $\epsilon_\ell = \epsilon/(L+1)$\;
   \For{each node $\tau$ in level $\ell$ of $\Gamma$}{
      $\tau.\widehat{H}\gets $\text{$H_c$ or $H_g$ method}$(\epsilon_\ell,\tau.H, \tau.G)$\label{line:tdswitch}\tcp*{Sec \ref{sec:nonhier}}
      convert $\tau.\widehat{H}$ to $\tau.\widehat{H}_g$\;
   }

}
\tcc{Consistency Step}
\For{every node $\tau$ in $\Gamma$}{
  $\tau.V_g \gets $EstVariance($\tau.\widehat{H}_g$)\label{line:estvar}\tcp*{Section \ref{subsec:initvar}}
}
root$.\widehat{H}^\prime = root.\widehat{H}$; root$.V^\prime_g=$root$.V_g$\;
\For{$\ell=0,\dots,L-1$}{
	\For{every node $\tau$ in level $\ell$ of $\Gamma$}{
            \tcc{Matching alg in Section \ref{subsec:match}}
			 $m_\tau$ $\gets$ Match($\tau.\widehat{H}$,$\set{c.\widehat{H}~:~c\in\child(\tau)}$)
             \For(\tcc*[h]{Sec \ref{subsec:merge}}){each $c\in\child(\tau)$}{
                $(c.\widehat{H}^\prime_g, c.V^\prime_g) \gets $Update$(m_\tau, c.\widehat{H}_g, c.V_g,\tau.\widehat{H}^\prime_g,\tau.V^\prime_g)$
             }             
%
%
%
%

	}
}
Convert $\tau.\widehat{H}^\prime_g$ to $\tau.\widehat{H}^\prime$ for all leaf nodes $\tau$\;
\For(\tcc*[h]{Back-Substitution}){$\ell = L-1,\dots, 0$}{
  $\tau.\widehat{H}^\prime = \sum_{c\in\child(\tau)} c.\widehat{H}^\prime$
}
\Return $\set{\tau.\widehat{H}^\prime~:~\tau\in\Gamma}$
\end{algorithm}

\subsection{Initial Variance Estimation}\label{subsec:initvar}
The first part of our algorithm produces a differentially private \changes{\coco} histogram $\tau.\widehat{H}$ for every node $\tau$. We then convert it into the \changes{unattributed} histogram $\tau.\widehat{H}_g$. For each  $i$, we need an estimate of the variance of the $i^\text{th}$ largest group $\tau.\widehat{H}_g[i]$. Now, the \changes{\coco} histogram $\tau.\widehat{H}$ is obtained either from the  $H_g$ method (Section \ref{subsec:unatt}) or the cumulative histogram $H_c$ method (Section \ref{subsec:cumu}) and so the variance depends on which method was used.\footnote{\changes{Generally, $H_c$ works well for all levels. Users preferring fine-grained control can use generic algorithm selection tools \cite{Kotsogiannis2017:Pythia,Chaudhuri2011ERMJMLR}.}}

\subsubsection{Variance estimation for the $H_g$ method} \label{subsubsec:HgVar}
Recall that in the \changes{unattributed} histogram method, we obtained a noisy array (noise was added to $\tau.H_g$) and performed isotonic regression on it to get $\tau.\widehat{H}_g$. Since we need an estimate of the variance  of $\tau.\widehat{H}_g[i]$, we can use the following special properties of isotonic regression solutions  \cite{barlowstatistical}. As shown in Figure \ref{fig:l2}, isotonic regression is equivalent to taking a noisy 1-d array, partitioning the array, and assigning the same value to each element within a partition. In the case of $L_2$ isotonic regression, this value is the average of the noisy counts from the partition. In the case of $L_1$, this value is the median of the noisy counts in the partition. Thus the variance of each cell in the resulting array depends on the variance due to partitioning and the variance due to averaging the noisy counts. We cannot quantify the variance due to partitioning. However, the variance due to averaging noisy counts can be estimated. 

Each noisy count is generated by adding noise from the double-geometric distribution with scale $1/\epsilon_1$. Its variance can be approximated by the variance of the Laplace distribution with the same scale. Namely, the variance is $2/\epsilon_1^2$. In a partition of size $S$ (and $L_2$ isotonic regression), the value assigned to that partition is the average\footnote{In the case of $L_1$ regression, the value assigned to the partition is the median of $S$ noisy values; the variance of the median is difficult to compute, so we again estimate it as $2/(S\epsilon^2)$} of $S$ noisy values and so has variance $2/(S\epsilon^2)$. The partitions that were created by the solution of the isotonic regression are easy to determine -- they are simply the consecutive entries in the solution that have the same value.

Thus, our estimate of the variance for group $i$ in $\tau.\widehat{H}_g$ is computed as follows. Let $S_i$ be the number of groups that were in the same partition as $i$ in the solution (i.e. the number of entries in $\tau.\widehat{H}_g$ that equal $\tau.\widehat{H}_g[i]$). Set the variance estimate for the $i^\text{th}$ largest group to be  the following in Line \ref{line:estvar} of Algorithm \ref{alg:top-down}: $\tau.V_g[i]= \frac{2}{|S_i|\epsilon_1^2}.$

\begin{figure}[t!]
\centering
\includegraphics[width=0.28\textwidth]{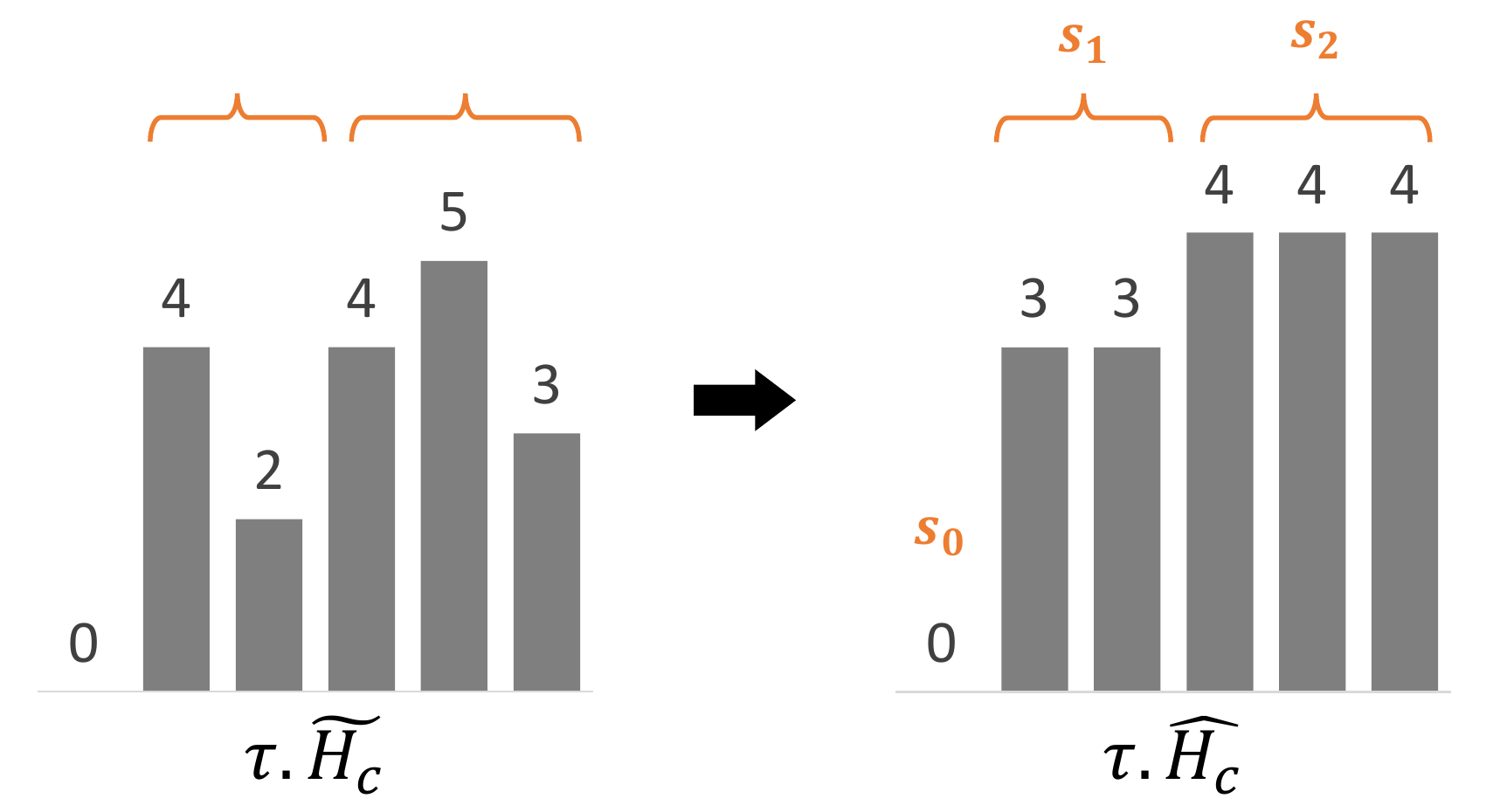}
\vspace{-9pt}
\caption{$L_2$ Isotonic Regression converts a noisy histogram that is no longer nondecreasing into a nondecreasing histogram by partitioning and averaging within each partition.}
\label{fig:l2}
\end{figure}

\subsubsection{Variance for the cumulative histogram method} \label{subsubsec:HcVar}
The cumulative histogram method also uses isotonic regression, but uses it to create $\tau.\widehat{H}_c$ while we need estimates of the variance of $\tau.\widehat{H}_g[i]$. Since the $H_g$ representation is a nonlinear transformation of the $H_c$ representation, we must use a different way to estimate variance.

 We know that before the isotonic regression in the $H_c$ method, independent noise with scale $1/\epsilon_1$ was added to each cell of $\tau.H_c$ so we (over)estimate the variance of $\tau.\widehat{H}_c[j]$ as $2/\epsilon_1^2$. The estimated number of groups of size $j$ is $\tau.\widehat{H}_{c}[j]-\tau.\widehat{H}_{c}[j-1]$ so it would have variance $4/\epsilon_1^2$. Dividing by the number of estimated groups of size $j$, this gives $4\epsilon_1^{-2}/(\tau.\widehat{H}_{c}[j]$ $-\tau.\widehat{H}_{c}[j-1])$ as the estimated variability in each group whose size is estimated to be $j$. Thus we set our estimate of the variance of the $i^\text{th}$ largest group to be the following in Line \ref{line:estvar} of Algorithm \ref{alg:top-down}:
$$\textstyle{\tau.V_g[i]=4 / (\epsilon_1^2\times\text{number of estimated groups of size $\tau.\widehat{H}_g[i]$})}$$

\subsection{Optimal Matching}
\label{subsec:match}
Now, every group belongs to a node in each level of the hierarchy (e.g., a household in Fairfax county is also a household in Virginia). Since group size distributions are estimated independently at all levels of the hierarchy (lines \ref{line:tdlevel} -- \ref{line:tdswitch} in Algorithm \ref{alg:top-down}), this causes several problems. First, each group has several size estimates (a household has a certain estimated size from the Fairfax County estimates and another estimated size from the Virginia estimate). Second, from these separate estimates, we don't know which group in one level of the hierarchy corresponds to which group at a different level of the hierarchy (e.g., is it possible that the $102^\text{nd}$ largest household in Virginia, with estimated size 12 is the same as the $3^\text{rd}$ largest household in Fairfax County, with estimated size 13?).

Thus, to make the group size \changes{distribution} estimates consistent, we need to estimate a matching between groups in $\tau$ and groups in the children of $\tau$, as in Figure \ref{fig:illustration}. 
For privacy reasons, such a matching must only be done using the differentially private data generated in Line \ref{line:tdswitch} of Algorithm \ref{alg:top-down}. In this section we explain how to perform this matching and in the next section we explain how to reconcile the different size estimates.

Formally, a matching is a function $m$ that inputs a node $\tau$ and an index $i$ and returns a node $c$ and index $j$ with the semantics that the $i^\text{th}$ smallest group in $\tau$ is believed to be the same as the $j^\text{th}$ smallest group in child $c$ of $\tau$. We must estimate this matching using only differentially private data (e.g., $\tau.\widehat{H}$ for each node $\tau$). We first convert each $\tau.\widehat{H}$ into the \changes{unattributed} representation $\tau.\widehat{H}_g$, where $\tau.\widehat{H}_g[i]$ is the size of the $i^\text{th}$ smallest group in $\tau$.

For each node $\tau$, we set up a bipartite weighted graph as shown in Figure \ref{fig:illustration}. There are $\tau.G$ nodes on the top half. We label them as $(\tau, 1), (\tau,2),\dots, (\tau, \tau.G)$. Each node on the bottom half has the form $(c, j)$, where $c$ is a child of $\tau$ and $j$ is an index into $c.\widehat{H}_g$. Between every node $(\tau, i)$ and $(c, j)$ there is an edge with weight $|\tau.\widehat{H}_g[i]-c.\widehat{H}_g[j]|$ which measures the difference in estimated size between the $i^\text{th}$ smallest group in $\tau$ and the $j^\text{th}$ smallest group in $c$. 

Our desired matching is then the least cost weighted matching on this bipartite graph. Sophisticated matching algorithms (e.g., based on network flows) can find an optimal matching, but they have time complexity at least $O(\tau.G)^3$ \cite{lovasz}. In our case, $\tau.G$ can be in the millions (e.g., there are over 100 million households in the U.S.).

There exists a well-known 2-approximation algorithm for matching, which adds edges to the matching in order of increasing weight \cite{lovasz}. However on our graph it would run in $O(\tau.G^2\log\tau.G)$ time as there are $\tau.G^2$ edges and they need to be sorted. Instead, we take advantage of the special weight structure of our edges to produce an optimal matching algorithm with time complexity $O(\tau.G\log\tau.G)$.

\begin{figure}[t!]
\centering
\includegraphics[width=0.38\textwidth]{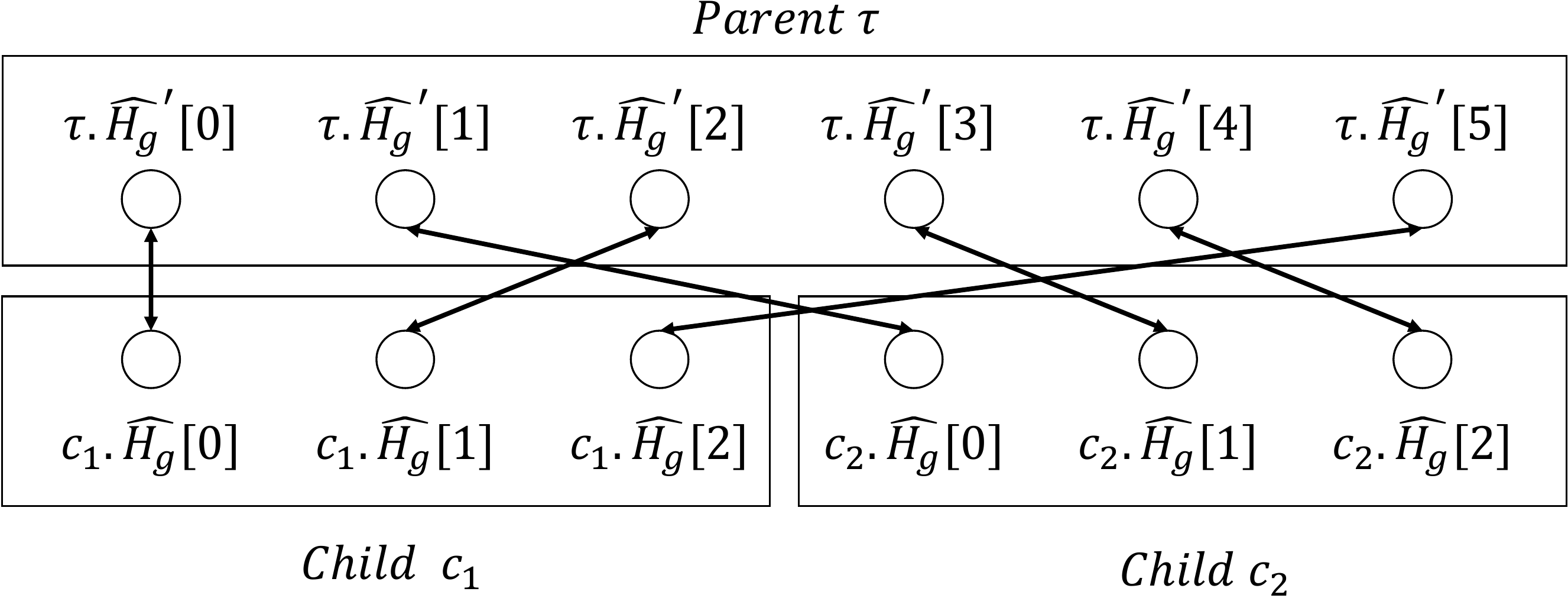}
\vspace{-9pt}
\caption{Consistency Matching Illustration}
\label{fig:illustration}
\end{figure}


\subsubsection{Optimal Matching Algorithm}
\begin{algorithm}[t!]
\caption{Matching}
\label{alg:opt}
\DontPrintSemicolon
\SetNoFillComment
\KwIn{$\tau.\widehat{H}_g$ and $c.\widehat{H}_g$ for $c\in\child(\tau)$}
\tcp{Unmatched nodes from top of bipartite graph}
Top $\gets \set{i~:~i=1,\dots,\tau.G}$\;
\tcp{Unmatched nodes from the bottom}
Bot $\gets\set{(c,i)~:~c\in\child(\tau); i=1,\dots, c.G}$\;
\While{Top$\neq \emptyset$}{
    \tcp{Smallest unmatched group size in Top}
    $s_t \gets \min\set{\tau.\widehat{H}_g[i]~:~i\in \text{Top}}$\;
    \tcp{Smallest unmatched group size in Bot}
    $s_b \gets \min\set{c.\widehat{H}_c[i]~:~(c,i)\in \text{Bot}}$\;
    \tcp{All groups from Top with min size}
    $G_t \gets \set{i\in\text{Top}~:~\tau.\widehat{H}_g[i]=s_t}$\;
    \tcp{All groups from Bot with min size}
    $G_b \gets \set{(c,i)\in\text{Bot}~:~c.\widehat{H}_g[i]=s_b}$\;
    \uIf{$|G_t|\geq |G_b|$}{
        \tcp{all nodes in $G_b$ can be matched now}
            For each $i\in G_t$, assign it arbitrarily to a unique $(c,j)\in G_b$\label{line:optarb1}\;
                 Remove $i$ from Top and $(c,j)$ from Bot\;
        }
       \Else{
           \tcp{Some nodes in $G_b$ can be matched now}
            num$[c]\gets$ \# records in $G_b$ belonging to child $c$.\;
           \For{each child $c$ of $\tau$} {
                    assign $|G_t|\frac{\text{num}[c]}{\sum_c \text{num}[c]}$ of the nodes in $G_t$ to arbitrary nodes in $G_b$ from child $c$\;\label{line:optarb2a}
                    Remove matched nodes from Top and Bot\;
                 }
            }
     }
    \Return the matching
\end{algorithm}

The algorithm is shown in Algorithm \ref{alg:opt}. To achieve the desired time complexity, we sort the groups in $\tau$ in increasing order and do the same to the set of all of groups in all of the children (hence the $O(\tau.G\log\tau.G)$ cost). We then proceed by matching the smallest unmatched group in $\tau$ to the smallest unmatched group among any of its children.

Normally there are many groups from $\tau.\widehat{H}_g$ with the same size and many child nodes with the same size that they can match to. For example, there can be 300 groups of size $1$ in $\tau.\widehat{H}_g$ and the child nodes together can have 200 groups of size $1$. Thus we can match 200 of the groups from the parent with the 200  groups at the children. The specifics of which group of size 1 in the parent matches which group of size 1 in the children is completely unimportant (since the groups of size 1 at the parent are completely indistinguishable from each other) and hence can be done arbitrarily as in Line \ref{line:optarb1} in Algorithm \ref{alg:opt}. After this matching, the remaining 300-200= 100 groups of size $1$ in the parent will then be matched with groups of size 2 in the child nodes, etc.

 Sometimes, not all of the groups in the children can be matched at once. For example, the parent may have 300 groups of size $1$, but there are 3 children with child $c_1$ having 200 groups of size 1, $c_2$ having 100 groups of size $1$, and $c_3$ having $100$ groups of size 1. In this case, the assignments are done proportionally:\footnote{Sometimes the proportions tell us that the $r$ groups at the parent should be matched with $r_1$ groups at child $c_1$, $r_2$ at child $c_2$, etc. (with $r=r_1+r_2+\dots$) but the $r_i$ are real numbers instead of integers. In this case we find the unique k such that rounding up the $r_i$ with the k largest fractional parts and rounding the rest down gives integers that sum up to $r$.}
  50\% of the parent's groups of size 1 are matched to the corresponding groups in $C_1$, 25\% are matched to the corresponding groups at $c_2$, and the remaining 25\% are matched to the groups in $c_3$ (e.g., line \ref{line:optarb2a} in Algorithm \ref{alg:opt}). In this situation, all of the size 1 groups in the parent will have been matched, but there will be some size 1 groups in the children that are not yet matched. The next iteration of the algorithm will try to match them to size 2 groups in the parent.
  
  \begin{lemma}
  If the weight between edges of the form $(\tau, i)$ and $(c,j)$ equals $|\tau.\widehat{H}_g[i]-c.\widehat{H}_g[j]|$, then Algorithm \ref{alg:opt} finds the optimal least-cost perfect matching.
  \end{lemma}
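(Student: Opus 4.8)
The plan is to reduce the statement to the classical fact that, for two equinumerous multisets of real numbers, matching them in sorted order minimizes the total $L_1$ transportation cost, and then to verify that Algorithm~\ref{alg:opt} actually produces such a sorted (monotone) matching. First I would record that a perfect matching even exists: since every group lies in exactly one leaf, the Group Size requirement forces $\tau.G=\sum_{c\in\child(\tau)} c.G$, so the top and bottom sides of the bipartite graph have equally many vertices. Each side is simply a multiset of group-size estimates, namely $\set{\tau.\widehat{H}_g[i]}$ on top and $\set{c.\widehat{H}_g[j]}$ over all children on the bottom, and the weight of the edge joining two vertices is the absolute difference of their values.

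The key lemma I would prove is an exchange/rearrangement argument. Call a perfect matching \emph{monotone} if, whenever two matched pairs have top values $a\le a'$, their bottom values satisfy $b\le b'$. For any monotone matching the total cost equals $\sum_i |a_{(i)}-b_{(i)}|$, where $a_{(i)}$ and $b_{(i)}$ denote the $i^{\text{th}}$ order statistics of the two multisets; in particular every monotone matching has the same cost. To show this cost is optimal, take any matching containing an \emph{inversion}, i.e. pairs $(a,b)$ and $(a',b')$ with $a\le a'$ but $b>b'$. Swapping partners to form $(a,b')$ and $(a',b)$ changes the total cost by $|a-b'|+|a'-b|-|a-b|-|a'-b'|$, which is $\le 0$ by the standard four-point inequality for the absolute value on the line (a consequence of convexity, using $a\le a'$ and $b'<b$). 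Resolving inversions one at a time never increases the cost and terminates at a monotone matching, so the monotone value $\sum_i |a_{(i)}-b_{(i)}|$ is the global minimum.

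It then remains to check that Algorithm~\ref{alg:opt} outputs a monotone matching. At each iteration it selects the smallest remaining top value $s_t$ and the smallest remaining bottom value $s_b$, and matches $\min(|G_t|,|G_b|)$ pairs, each having top value exactly $s_t$ and bottom value exactly $s_b$: in the \textbf{if} branch it exhausts $G_b$, in the \textbf{else} branch it exhausts $G_t$, and the proportional split across children (Line~\ref{line:optarb2a}) merely decides \emph{which} equal-valued bottom nodes are consumed. Hence after each iteration at least one current minimum is used up, the minima are non-decreasing across iterations, and by induction the pairs are emitted in non-decreasing order of both coordinates; the final matching therefore contains no inversion and is monotone. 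Combining this with the lemma of the previous paragraph yields that Algorithm~\ref{alg:opt} attains the minimum cost, which is the claim.

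I expect the main obstacle to lie in this last bookkeeping step rather than in the inequality: one must argue that the arbitrary within-tie assignments (Lines~\ref{line:optarb1} and~\ref{line:optarb2a}) and the integer rounding of the proportional split can never force a matched pair whose bottom value exceeds that of a later-matched pair with a smaller-or-equal top value. The clean way to dispatch this is to note that every tie-broken choice matches nodes of \emph{equal} value, so all such choices leave both the total cost and the monotonicity certificate unchanged; optimality is thus independent of how the arbitrary selections are resolved.
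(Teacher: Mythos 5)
Your proof is correct, but it is organized quite differently from the paper's. The paper argues by contradiction directly against the algorithm's output: it posits an optimal matching $m^*$ with no ``trivial differences'' from the greedy matching $m$ and with the latest possible first non-trivial disagreement, uses the greedy order to derive $\tau.\widehat{H}_g[i]\leq\tau.\widehat{H}_g[i^*]$ and $c.\widehat{H}_g[j]\leq c^*.\widehat{H}_g[j^*]$, and then runs through four explicit orderings of the four values, in each case exhibiting a swap that contradicts either optimality of $m^*$ or its extremal choice. You instead factor the argument into (i) a classical rearrangement lemma --- for two equinumerous multisets of reals, the monotone (sorted-order) coupling minimizes total $L_1$ cost, proved by the four-point exchange inequality --- and (ii) a verification that Algorithm \ref{alg:opt} emits pairs with non-decreasing values on both sides and hence realizes that coupling. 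The underlying inequality is the same (the paper's four cases are exactly the case analysis of $|a-b'|+|a'-b|\leq|a-b|+|a'-b'|$ for $a\leq a'$, $b'\leq b$), but your decomposition is more modular: it identifies the matching as one-dimensional optimal transport, makes the optimal cost $\sum_i |a_{(i)}-b_{(i)}|$ explicit (nicely echoing the earthmover's-distance formula of Lemma \ref{lem:error}), and confines all algorithm-specific reasoning to the easy monotonicity check, whereas the paper's extremal choice of $m^*$ is the more delicate device for handling ties that you dispatch with the equal-value observation at the end. Two small points you should still spell out: termination of the inversion-resolution process (e.g., the total number of inversions strictly decreases under each swap, a bubble-sort potential argument), and the fact that your definition of ``monotone'' must be read up to permutations within blocks of equal top values, which is exactly the caveat your final paragraph gestures at.
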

  \begin{proof}
  We say that two matchings $m_1$ and $m_2$ have a \emph{trivial} difference at edge $(a,b)$ if all of the following hold:
  \begin{enumerate}[noitemsep,leftmargin=5mm]
  \item $m_1$ matches $a$ to $b$.
  \item there are nodes $d,c$ such that $m_2$ matches $a$ to $c$ and $d$ to $b$.
  \item if $m_2$ is modified to match $a$ to $b$ and $d$ to $c$, then the cost of $m_2$ doesn't change.
  \end{enumerate}
  Two matchings $m_1$ and $m_2$ have a \emph{non-trivial} difference at edge $(a,b)$ if $(a,b)$ is part of matching $m_1$ but not $m_2$, and it is not a trivial difference.
  
  Let $m$ be the matching returned by Algorithm \ref{alg:opt}. Assume, by way of contradiction, that $m$ is not optimal. In this case, there will always be optimal matchings with no trivial differences with $m$ (i.e. all differences will be nontrivial). This is because we can make trivial differences at an edge $(a,b)$ disappear by performing the modification discussed above which doesn't cause the cost to change.
  
  When examining the order in which pairs of matched nodes are added to $m$ by Algorithm \ref{alg:opt}, for any optimal matching $m^\prime$ with no trivial differences, there is a first time at which $m$ and $m^\prime$ have a non-trivial difference, e.g., they agree on the first $n$ matchings but disagree on the $n+1\text{st}$. Let $m^*$ be an optimal matching that has no trivial differences, and has the largest possible time for its first non-trivial difference. Let the edge on which $m$ and $m^*$ first differ be $((\tau, i), (c, j))$. 
 This means that $m^*$ matches $(\tau,i)$ to some $(c^*, j^*)$ and matches some $(\tau, i^*)$ to $(c,j)$. 
 
 By construction of the algorithm, the following are true:
 \begin{align*}
 \tau.\widehat{H}_g[i] &\leq \tau.\widehat{H}_g[i^*]\quad\text{and}\quad
 c.\widehat{H}_g[j] \leq c^*.\widehat{H}_g[j^*],
\end{align*}
otherwise either node $(c^*,j^*)$ would have been matched by the algorithm before $(c,j)$, triggering a non-trivial difference earlier, or $(\tau, i^*)$ would have been matched by the algorithm before $(\tau, i)$, causing a non-trivial difference earlier; in either case, this would contradict the fact that $((\tau, i), (c, j))$ is the first non-trivial difference. 

 Now there are only four possible cases: 
 \begin{enumerate}[noitemsep]
 \item $\tau.\widehat{H}_g[i] \leq \tau.\widehat{H}_g[i^*] \leq c.\widehat{H}_g[j] \leq c^*.\widehat{H}_g[j^*]$ 
\item $\tau.\widehat{H}_g[i] \leq c.\widehat{H}_g[j] \leq \tau.\widehat{H}_g[i^*]  \leq c^*.\widehat{H}_g[j^*]$ 
\item $c.\widehat{H}_g[j] \leq c^*.\widehat{H}_g[j^*] \leq \tau.\widehat{H}_g[i] \leq \tau.\widehat{H}_g[i^*]$ 
\item $c.\widehat{H}_g[j]\leq \tau.\widehat{H}_g[i] \leq c^*.\widehat{H}_g[j^*]  \leq \tau.\widehat{H}_g[i^*]$ 
\end{enumerate}
 
 Cases 1 and 3 are symmetric (they involve interchanging the top and bottom of the bipartite graph) and Cases 2 and 4 are also symmetric in the same way. Thus the same proof technique for Case 1 will apply to Case 3, and the same technique for Case 2 will apply to Case 4.
 
 \noindent\textbf{Case 1: }$\tau.\widehat{H}_g[i] \leq \tau.\widehat{H}_g[i^*] \leq c.\widehat{H}_g[j] \leq c^*.\widehat{H}_g[j^*]$. Based on this ordering, we first list a few identities:
 \begin{align}
 |\tau.\widehat{H}_g[i] - c.\widehat{H}_g[j]| &= \substack{|\tau.\widehat{H}_g[i] - \tau.\widehat{H}_g[i^*]| \\+ |\tau.\widehat{H}_g[i]-c.\widehat{H}_g[j]|}\label{eqn:optproof1}\\
 |\tau.\widehat{H}_g[i^*]  - c^*.\widehat{H}_g[j^*]| &= \substack{|\tau.\widehat{H}_g[i^*] - c.\widehat{H}_g[j]| \\+ |c.\widehat{H}_g[j]- c^*.\widehat{H}_g[j^*]|} \label{eqn:optproof2}\\
 |\tau.\widehat{H}_g[i] - c^*.\widehat{H}_g[j^*]|&=\substack{|\tau.\widehat{H}_g[i] - \tau.\widehat{H}_g[i^*]|\\+|\tau.\widehat{H}_g[i]-c.\widehat{H}_g[j]|\\+|c.\widehat{H}_g[j]- c^*.\widehat{H}_g[j^*]|}\label{eqn:optproof3}\\
 |\tau.\widehat{H}_g[i^*]-c.\widehat{H}_g[j]| &= |\tau.\widehat{H}_g[i^*]-c.\widehat{H}_g[j]|\label{eqn:optproof4}
 \end{align}
 
 Now we see that the sum of Equations \ref{eqn:optproof1} and \ref{eqn:optproof2} equals the sum of Equations \ref{eqn:optproof3} and \ref{eqn:optproof4}.  Thus we can change the optimal matching $m^*$ by matching $(\tau, i)$ to $(c,j)$ (instead of the original connection $(\tau, i)$ to $(c^*, j^*)$) and match $(\tau, i^*)$ to $(c^*,j)$ and the cost of the matching will stay the same (contradicting the choice of $m^*$ -- that it is not supposed to have trivial differences from $m$; in fact, such reassignment of edges is how trivial differences are removed).

 \noindent\textbf{Case 2:} $\tau.\widehat{H}_g[i] \leq c.\widehat{H}_g[j] \leq \tau.\widehat{H}_g[i^*]  \leq c^*.\widehat{H}_g[j^*]$
 
 From this ordering it is clear that: 
 \begin{align*}
 \lefteqn{|\tau.\widehat{H}_g[i] - c.\widehat{H}_g[j]| + |\tau.\widehat{H}_g[i^*]  - c^*.\widehat{H}_g[j^*]|}\\
 &\leq |\tau.\widehat{H}_g[i] - c^*.\widehat{H}_g[j^*]|\\
 &\leq |\tau.\widehat{H}_g[i] - c^*.\widehat{H}_g[j^*]| + |\tau.\widehat{H}_g[i^*]-c.\widehat{H}_g[j]|
  \end{align*}
  Thus we can change the optimal matching $m^*$ by matching $(\tau, i)$ to $(c,j)$ (instead of the original connection $(\tau, i)$ to $(c^*, j^*)$) and match $(\tau, i^*)$ to $(c^*,j)$ and the cost of the matching will either decrease (contradicting optimality of $m^*$) or stay the same (contradicting the choice of $m^*$ -- that it is not supposed to have trivial differences from $m$).
  
\noindent\textbf{Case 3:} $c.\widehat{H}_g[j] \leq c^*.\widehat{H}_g[j^*] \leq \tau.\widehat{H}_g[i] \leq \tau.\widehat{H}_g[i^*]$. Symmetric to case 1, so omitted.

\noindent\textbf{Case 4:} $c.\widehat{H}_g[j]\leq \tau.\widehat{H}_g[i] \leq c^*.\widehat{H}_g[j^*]  \leq \tau.\widehat{H}_g[i^*]$. Symmetric to case 2, so omitted.
 \end{proof}

\subsection{Merging Estimates}\label{subsec:merge}
%
Given a node $\tau$, the matching algorithm assigns one group in $\tau$ to one group in some child of $\tau$ (i.e. it says that, for every $i$, the $i^\text{th}$ smallest group in $\tau$ matches the $j^\text{th}$ smallest group in child $c$ of $\tau$). This means that for every group, we have two estimates of its size: $\tau.\widehat{H}_g[i]$ and $c.\widehat{H}_g[j]$ as well as corresponding estimates of its variance $\tau.V_g[i]$ and $c.V_g[j]$.
There are two possible ways of reconciling these estimates.

\textbf{Naive strategy.}
The simplest way is to simply average  $\tau.\widehat{H}_g[i]$ and $c.\widehat{H}_g[j]$. This approach would be valid if the variance estimates were not accurate -- recall that it is not possible to estimate the variances exactly, so they needed to be approximated. However, as we show in our experiments, a weighted averaging based on the estimated variance outperforms this strategy.

\textbf{Variance-based weighted strategy.} If we have two noisy estimates of the same quantity (e.g., $\tau.\widehat{H}_g[i]$ and $c.\widehat{H}_g[j]$), along with their variances  $\tau.\widehat{H}_g[i]$ and $c.\widehat{H}_g[j]$, it is a well-known statistical fact (e.g., see \cite{Hay2010Boosting}) that the optimal linear way of combining the estimates is to estimate the size as a weighted average, where the weights are inversely proportional to the variance:
{
\begin{align}
\textstyle{
\left(\frac{\tau.\widehat{H}_g[i]}{\tau.V_g[i]} + \frac{c.\widehat{H}_g[j]}{c.V_g[j]}\right)  \Big/\left(\frac{1}{\tau.V_g[i]} + \frac{1}{c.V_g[j]}\right)\label{eqn:varest1}
}
\end{align}
}

and the variance of this estimator is
{
\begin{align}
\textstyle{
\left(\frac{1}{\tau.V_g[i]} + \frac{1}{c.V_g[j]}\right)^{-1}\label{eqn:varest2}
}
\end{align}
}
Thus we would update the size of the $j^\text{th}$ largest group at child $c$ using Equation \ref{eqn:varest1} and its new variance as Equation \ref{eqn:varest2}. This estimator is preferable to the naive strategy when the variance estimates are accurate.

The size estimates are then rounded. After the size estimates at the children are updated, the top-down algorithm continues matching the groups in each child $c$ with the groups at the children of $c$. Once the groups at the leaves are updated, the resulting sizes at the leaves are treated as the final estimates.

%
%
%
%
%
 
\subsection{Privacy}
\begin{theorem}
Algorithm \ref{alg:top-down} satisfies $\epsilon$-differential privacy.
\end{theorem}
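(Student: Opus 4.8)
The plan is to isolate the single point at which the private Entities table is actually touched---Line~\ref{line:tdswitch} of Algorithm~\ref{alg:top-down}, where the $H_c$ or $H_g$ method injects geometric noise---and then argue that everything downstream is post-processing. First I would fix a level $\ell$ and observe that the regions associated with distinct nodes $\tau$ in that level are disjoint. By the standing assumption that a group cannot span multiple leaves, each entity belongs to exactly one group, which lies in exactly one region at level $\ell$; hence adding or removing a single Entities record changes the input histogram $\tau.H$ of \emph{precisely one} node of level $\ell$ and leaves all the others untouched. This is exactly the setting of \emph{parallel composition}: the per-node mechanisms operate on disjoint partitions of the data, so releasing $\set{\tau.\widehat{H} : \tau \text{ in level } \ell}$ costs $\epsilon_\ell$ total rather than $\epsilon_\ell$ per node.

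Next I would check that each per-node mechanism is itself $\epsilon_\ell$-differentially private. For the $H_c$ method, Lemma~\ref{lem:hcsens} gives global sensitivity $1$, and double-geometric noise of scale $1/\epsilon_\ell$ is added; by Definition~\ref{def:geomech} and the privacy of the Geometric Mechanism this is $\epsilon_\ell$-DP. For the $H_g$ method the unattributed histogram also has sensitivity $1$ and again receives scale-$1/\epsilon_\ell$ noise, so it is $\epsilon_\ell$-DP as well. Combined with the parallel-composition observation, the entire level-$\ell$ release satisfies $\epsilon_\ell$-differential privacy regardless of which of the two methods is chosen at each node.

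I would then compose across levels. Because the $L+1$ levels are processed with independent fresh noise, \emph{sequential composition} shows that the full collection $\set{\tau.\widehat{H} : \tau \in \Gamma}$ satisfies $\left(\sum_{\ell=0}^{L}\epsilon_\ell\right)$-differential privacy, and since $\epsilon_\ell = \epsilon/(L+1)$ this sum is exactly $\epsilon$. Finally, the consistency stage---variance estimation (Section~\ref{subsec:initvar}), the optimal matching (Section~\ref{subsec:match}), the variance-weighted merge (Section~\ref{subsec:merge}), and the back-substitution loop---takes as input only the already-private quantities $\tau.\widehat{H}$ together with the public Hierarchy and Groups tables, and never re-reads the Entities table. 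It is therefore a (possibly randomized) function of differentially private outputs, and by the post-processing invariance of differential privacy it incurs no additional privacy loss; the final $\set{\tau.\widehat{H}^\prime : \tau\in\Gamma}$ inherits $\epsilon$-DP.

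The main obstacle, and the step most deserving of care, is the within-level parallel-composition argument: one must explicitly invoke the structural assumption that a group lies entirely within one leaf to conclude that a single changed record perturbs only one node's histogram at each level. Without this, the naive accounting would charge $\epsilon_\ell$ per node and blow the budget up by the number of nodes; everything else reduces to the standard composition and post-processing theorems.
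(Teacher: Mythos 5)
Your proof is correct and follows essentially the same route as the paper's: sequential composition across the $L+1$ levels, parallel composition within a level (justified by the fact that each group lies in exactly one node per level), the sensitivity/noise-scale analysis deferred to Sections~\ref{subsec:unatt} and~\ref{subsec:cumu}, and post-processing invariance for the entire consistency stage. Your treatment of the within-level parallel composition is somewhat more explicit than the paper's one-sentence version, but the argument is the same.
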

\begin{proof}
Privacy is easy to analyze because the algorithm separates the differentially private data access from the post-processing. Specifically, the only part of the algorithm that touches the sensitive data occurs in Lines \ref{line:tdlevel} through \ref{line:tdswitch}. It uses sequential composition across levels of the hierarchy. Thus each of the $L+1$ levels is assigned $\epsilon/(L+1)$ of the privacy budget. Within each level there is parallel composition because adding or removing one person from a group only affects the node that contains that group (and none of the sibling nodes). The \changes{\coco} histograms produced at each node use either the method of Section \ref{subsec:unatt} or Section \ref{subsec:cumu}) with privacy budget $\epsilon/(L+1)$ and they scale the noise correctly to the global sensitivity, as discussed in those sections.

The rest of the top-down algorithm is completely based on the differentially private results of of Lines \ref{line:tdlevel} through \ref{line:tdswitch}. The conversions between $\tau.\widehat{H}_g, \tau.\widehat{H}_c,$ and $\tau.\widehat{H}$ are trivial manipulations of the histogram format (they do not touch the original data). The variance estimation is based on $\tau.\widehat{H}_g, \tau.\widehat{H}_c,$ and $\tau.\widehat{H}$. The matching algorithm only uses $\tau.\widehat{H}_g$ for each node, and the method for merging estimates only uses $\tau.\widehat{H}_g$ (for each node) and the associated variance estimates (which were computed from $\tau.\widehat{H}_g, \tau.\widehat{H}_c,$ and $\tau.\widehat{H}$). Since post-processing differentially private outputs still satisfies differential privacy \cite{DMNS06Calibrating}, the overall algorithm satisfies differential privacy.
\end{proof}

\section{Experiments}\label{sec:experiments}
In this section, we present our experiments, which were conducted on a machine with dual 8-core 2.1 GHz Intel(R) Xeon(R) CPUs and 64 GB RAM. 

\subsection{Datasets}
We used 4  large-scale datasets for evaluation.

\textbf{\underline{Partially synthetic housing}}. Individuals live in households and group quarters. The number of individuals in each facility is important but this information was truncated past households of size 7 in the 2010 Decennial Census, Summary File 1 \cite{sf1}. Thus we created a partially synthetic dataset that mirrors the published statistics, but adds a heavy tail as would be expected from group quarters (e.g., dormitories, barracks, correctional facilities). This was done for each state by estimating the ratio $\#$ households of size 7/$\#$ households of size 6, and then randomly sampling (with a binomial distribution) the number of groups of size $k\geq 8$ so that the same ratio holds (in expectation) between number of groups of neighboring sizes. Then 50 outliers groups are chosen with size uniformly distributed between 1 and 10,000. The hierarchy in this levels are National and State (50 states plus Puerto Rico and the District of Columbia). The third level is County, which we obtained by randomly assigning groups at the state to their counties (the assignment was proportional to county size).

\textbf{\underline{NYC taxi}}: We use 143,540,889 Manhattan taxi trips from the 2013 New York City taxi dataset \cite{nyctaxi}. An anonymized taxi medallion (e.g., a taxi) is considered a group and the size of the group is the number of pickups it had in a region. The region hierarchy is the following. Level 0: Manhattan; level 1: upper/lower Manhattan; level 2: 28 neighborhoods from NTA boundary \cite{nycopen}.

\textbf{\underline{Race distribution}} (white and Hawaiian): For each block, based on 2010 Census data (in Summary File 1 \cite{sf1}), we count the number of whites and number of native Hawaiians that live in the block. Hence block is treated as a group. The hierarchy is National, State, and County. We performed evaluations on all 6 major race categories recorded by the Census, but omitted the rest due to space restrictions.  

The statistics for our datasets are the following.\\
\begin{tabular}{ | c | c | c | c |  }
\hline
   Data      &   \# groups  & \# people/trip  &   \# unique size  \\
\hline 
Synthetic & 240,908,081 & 605,304,918 & 2352 \\
\hline
White & 11,155,486 & 226,378,365 & 1916 \\
\hline
Hawaiian & 11,155,486 & 540,383 & 224\\
\hline
Taxi & 360,872 & 130,962,398 & 3128 \\
\hline
\end{tabular}
For some \changes{\coco} estimation methods, such as the cumulative sum $H_c$ method, one needs to specify a public maximum group size $K$. We set $K=100,000$ as a conservative estimate (for example, in our partially synthetic housing dataset, the true max size was around $10,000$, an order of magnitude smaller). 
For our 2-level hierarchy experiments on Census related data, we used National/State. For 3 level, we used West Coast/State/County. All numbers plotted are averaged over 10 runs.

%

\begin{figure*}[ht]
	\centering
	\includegraphics[width=14.5cm]{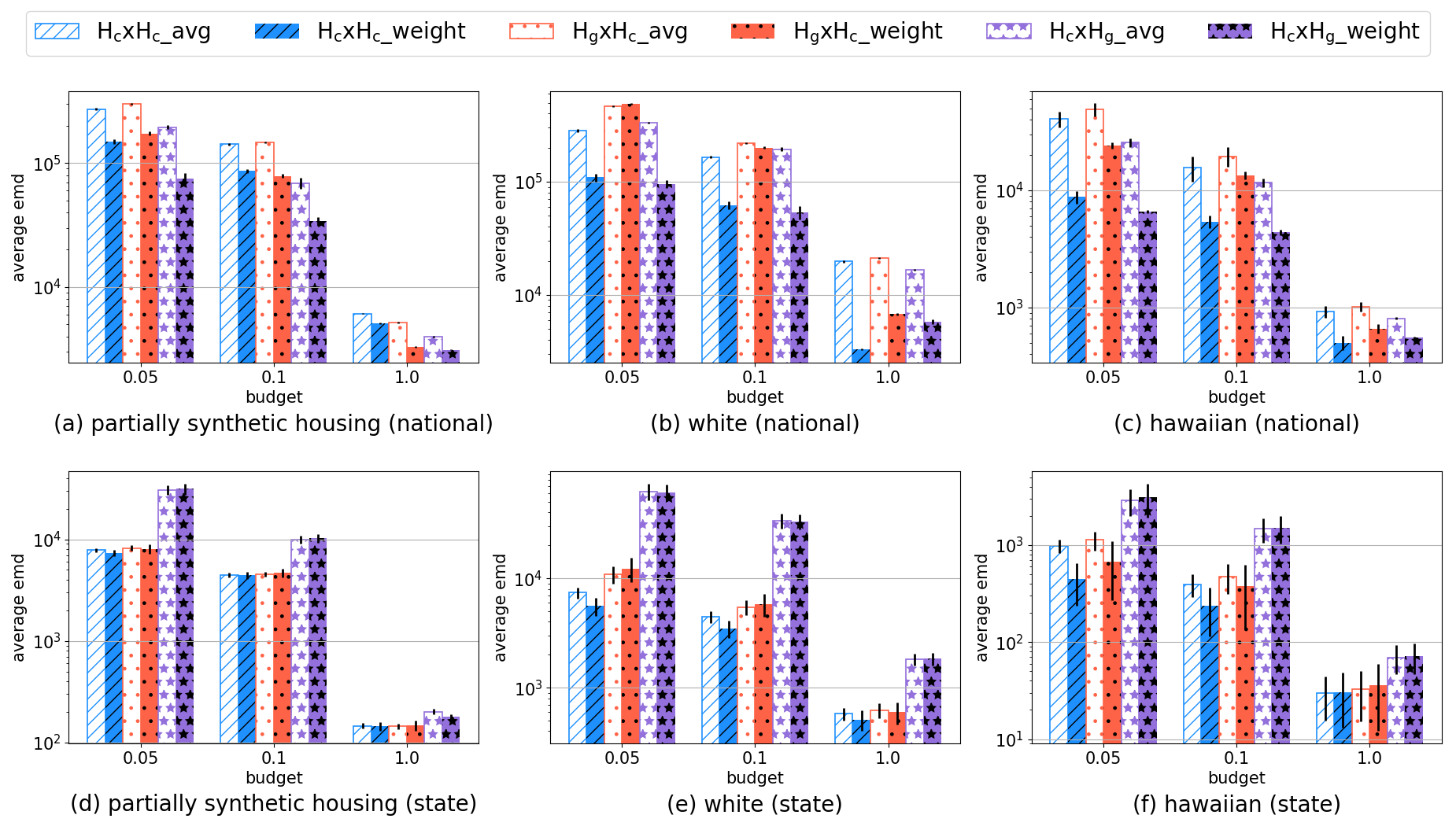}
 \vspace{-7pt}
	\caption{Merging estimates using weighted average vs. normal average. x-axis: privacy budget per level.}
  \label{fig:metrics}
\end{figure*}

\subsection{Evaluation}
\textbf{Evaluation metric}. For each level of the hierarchy, we evaluate Earthmover's distance (emd) as discussed in \Cref{subsec:error}, per node in the level, in order to see error at each level. We do not aggregate error across all levels of the hierarchy as there is no principled way of, for example, weighting the importance of error at the state level compared to the national level. Each error measure is averaged across 10 runs. The standard deviation of the average is then the empirical standard deviation (which gives std for one random run) divided by $\sqrt{10}$ (which then gives the std for the mean of 10 runs). We plot 1 std error bars on each figure.

\textbf{Algorithm selection}. We evaluate a variety of choices for generating hierarchical \changes{\coco histograms}. For example, we can use the $H_g$ method to generate estimates at national level but $H_c$ at state level and $H_c$ again at county level, which we would denote as $H_g\times H_c\times H_c$ and compare it against $H_g\times H_g\times H_g$. \changes{In general, using the $H_c$ method at each level gives best results except on the partially synthetic data, thus we recommend it as the default choice. However, as is true with all data-dependent, no algorithm is expected to dominate on all datasets \cite{Kotsogiannis2017:Pythia}. For fine-grained algorithm selection, one could use Pythia \cite{Kotsogiannis2017:Pythia} or the approach of Chaudhuri et al. \cite{Chaudhuri2011ERMJMLR}. As these are well-established techniques, they are outside the scope of the paper.}

\textbf{Interpreting error}. How can we interpret the error numbers of the algorithms? I.e., what is a good error? We suggest comparison against the following ``omniscient'' algorithm. Given an $\epsilon$, the omniscient algorithm will know which group sizes exist in which node in the hierarchy and reduce it to the problem of estimating a simple histogram of known group size by location. Thus it will split its privacy budget per level of the hierarchy and will add Laplace$(1/\epsilon)$ noise (with standard deviation of $\sqrt{2}/\epsilon$) only to those groups that exist. Meanwhile, the $\epsilon$-differentially private algorithms must effectively estimate which group sizes exist and also estimate their sizes, along with which nodes in the hierarchy the groups belong to. The error for the omniscient algorithm is then expected to be $[\#\text{distinct group sizes} \times \sqrt{2}/\epsilon\times \#\text{levels}]$. For example, in Figure \ref{fig:metrics}, at privacy budget $0.1$ per level and the partially synthetic housing data at the national level, there are $2,352$ groups and so the omniscient algorithm will have expected error around $3.3\times 10^4$, which is in line with the results for the best differentially private method in that figure.

\subsubsection{Ruling out the naive method}\label{subsec:expnaive}
The naive strategy adds noise to $\tau.H$ directly (Section \ref{subsec:naive}).   Its average error with $\epsilon=1$ is \changes{in the billions, as} shown in the following table \changes{confirming the analysis in \Cref{subsec:naive}.} 

\smallskip

\hspace{-14pt}
\begin{tabular}{ | c | c | c | c |  }
\hline
 Synthetic    &  White  & Hawaiian  & Taxi  \\
\hline
4,462,728,374 &  4,809,679,734 & 4,027,891,692 & 208,977,518\\
\hline
\end{tabular}
\changes{This error is several orders of magnitude larger than the other methods (e.g. Figure \ref{fig:metrics}), so is not considered further.}


\begin{figure*}[htb!]
	\centering
    \includegraphics[width=14.5cm]{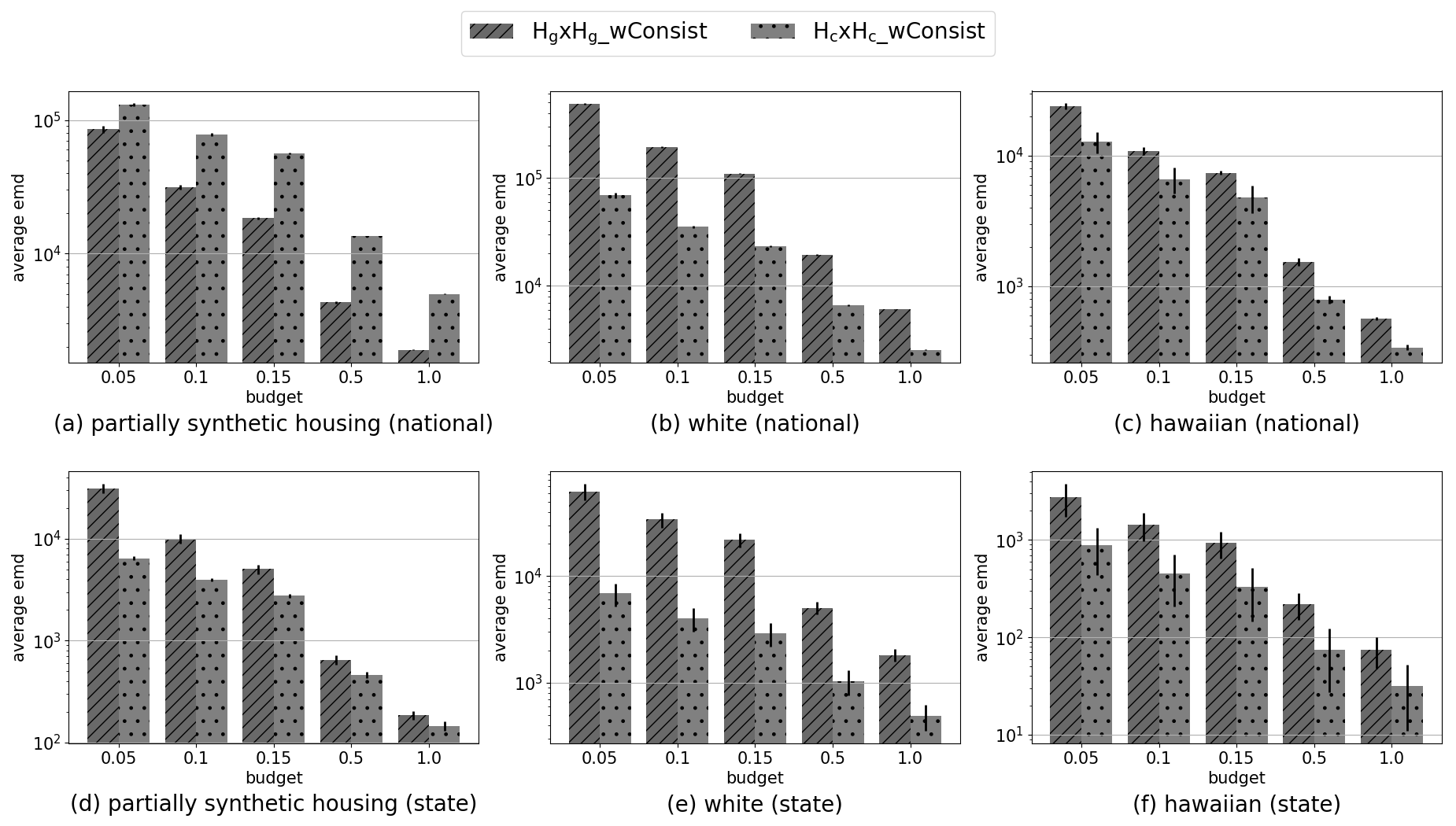}
\vspace{-5pt}
	\caption{2-level consistency at each level. x-axis: privacy budget per level.}
  \label{fig:2lv-all}
\end{figure*}

\subsubsection{Comparison to bottom-up aggregation}
\changes{
The Bottom-Up (BU) baseline allocates all privacy budget to the leaves and then sets the \coco histogram of a parent to be the sum of the histograms at the leaves. As mentioned earlier, it is expected to have very low error at the leaves but higher error everywhere else (an observation that is true in general for hierarchical problems \cite{Qardaji2013Hierarchical,Hay2010Boosting}). In the following table we examine the error at each level of BU compared to the consistency algorithm that uses $H_c$ at each level. Both approaches use a total privacy budget of $\epsilon=1.0$.
}

\vspace{-0.2em}
\begin{tabular}{c|r|r|r|r|}\cline{2-5}
\multicolumn{1}{c}{} & \multicolumn{1}{|c}{Part. Synth.} & \multicolumn{1}{|c}{White}& \multicolumn{1}{|c}{Hawaiian} & \multicolumn{1}{|c|}{Taxi}\\\cline{2-5}
\multicolumn{0}{c}{}&\multicolumn{4}{c}{Level 0}\\\cline{2-5}
BU &  $78,459.0$ & $448,909.0$ & $13,968.0$ & $20,731.0$\\
$H_c$ & $32,480.0$ & $17,000.0$ & $1,381.0$ & $10,547.0$\\\cline{2-5}
\multicolumn{0}{c}{}&\multicolumn{4}{c}{Level 1}\\\cline{2-5}
BU &  $1,512.2$ & $8,722.0$ & $270.1$ & $10,405.5$\\
$H_c$ & $ 1,000.3$ & $1,511.8 $ & $ 117.7$ & $ 5,431.5$\\\cline{2-5}
\multicolumn{0}{c}{}&\multicolumn{4}{c}{Level 2}\\\cline{2-5}
BU &  $24.9 $ & $152.3$ & $4.3$ & $772.8$\\
$H_c$ & $ 80.1$ & $363.8$ & $21.6$ & $1,601.8$\\\cline{2-5}
\end{tabular}

\changes{The results are as expected, with significant improvements at the higher level nodes in exchange for a small increase in error at the leaves.}

\subsubsection{Weighted Average Estimation Comparison} \label{subsec:wavg}

Next we evaluate the merging strategy for groups that are matched across different levels of the hierarchy (as described in Section \ref{subsec:merge}). There are two choices. If a group from one level of the hierarchy (e.g., a household at the state level) is matched to a group at a different level (e.g., at the national level), we can either average the two group size estimates, or use a weighted average based on their estimated variances from Section \ref{subsec:initvar}. If the variance estimates are accurate, we expect such a merging to result in improved histogram error when compared to normal averaging (see Figure \ref{fig:metrics}).

We estimate two levels of the hierarchy for each dataset (partially synthetic, Hawaiian race distribution, White race distribution are shown due to space limitations) and consider various methods for the initial group size estimates in each of the two levels (Section \ref{sec:nonhier}); for example $H_c\times H_g$ means that the $H_c$ method was used for the top level and $H_g$ for the second level. We see that the weighted average method consistently produces large reductions in error (compared to normal averaging) at the top level for every privacy budget and combination of methods ($H_g/H_c$), and consistently produces modest improvements in error at the second level. Hence we conclude that our variance estimates, which are used in the merging procedure, are useful approximations. The only method missing from these graphs is the combination $H_g\times H_g$. On these datasets, the error of normal average was so large that it would visually skew the results. Due to the superiority of weighted averaging, all following experiments use this method of merging estimates.

\subsubsection{2-Level Hierarchy Results} \label{subsec:2lv}


We next show our results for the case when  2 levels of the hierarchy need to be estimated, in  \Cref{fig:2lv-all}. Due to space limitations, we present the consistency result for $H_g\times H_g$ and $H_c\times H_c$ using weighted averaging to merge estimates. One thing to notice is that the best performing method is comparable to the omniscient baseline \changes{and generally the $H_c$ method is the one that performs best.} Typically, we expect the $H_c$ method to dominate for ``dense'' data. For example, white population data contains many groups from size 0 to size 3000. On the other hand, the partially synthetic housing data is more sparse at the national level, with many small groups (e.g., sizes 1-12) followed by large gaps between group sizes. In such cases, methods based on $H_g$ work better.

\subsubsection{3-Level Hierarchy Results}  \label{subsec:3lv}

%
%
%

\begin{figure*}[ht]
	\centering
	\includegraphics[width=18cm]{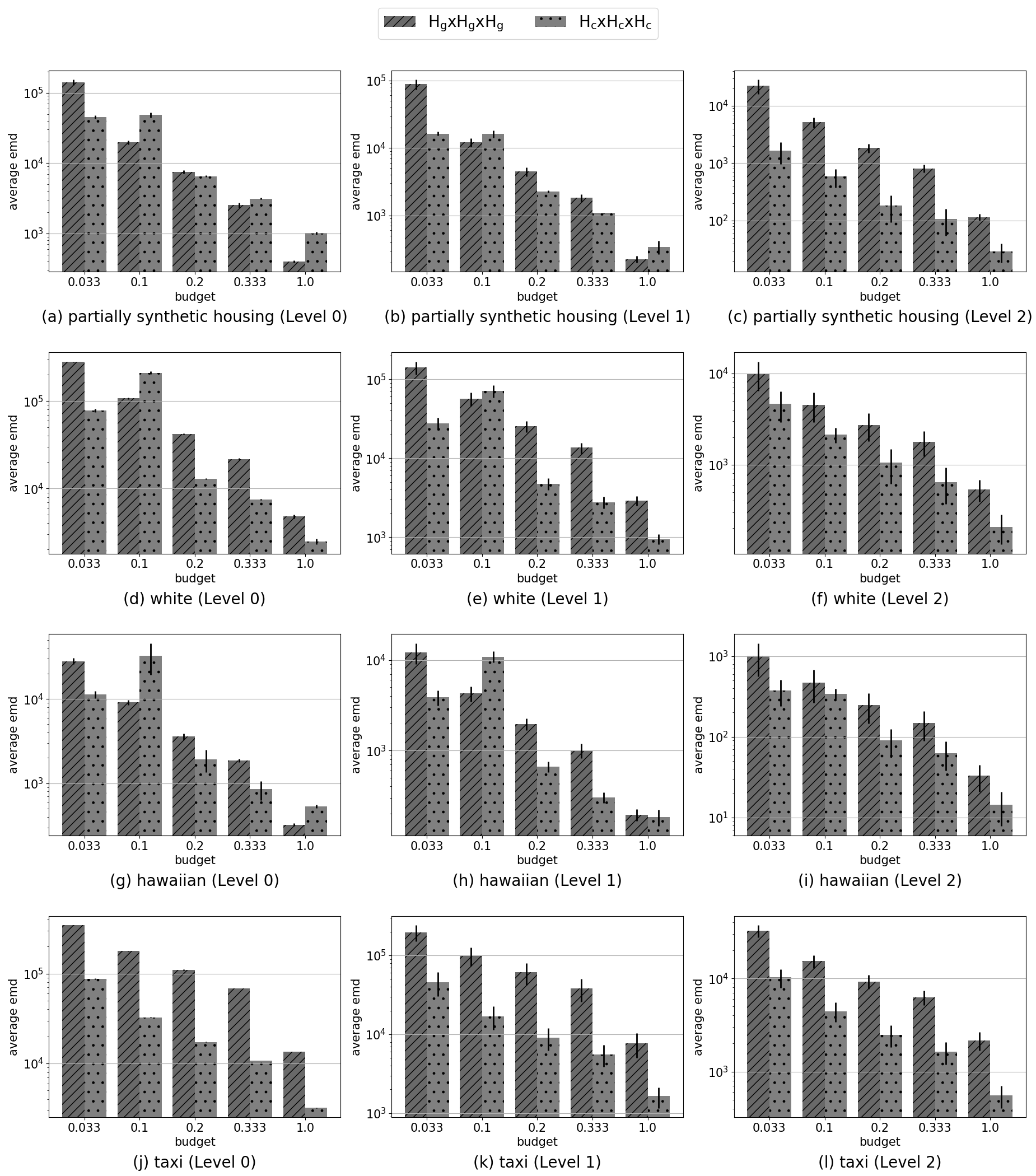}
	\caption{3-level consistency at each level. x-axis: privacy budget per level.}
  \label{fig:3lv-all}
\end{figure*}

We next show the results when three levels of the hierarchy need to be estimated. Because there are over 3,000 counties (hence 3,000 isotonic regressions), for computational reasons we limit the hierarchy to the west coast (for the partially synthetic housing data as well as the race distribution data.). Of these two alternatives $H_g \times H_g \times H_g$ and $H_c \times H_c \times H_c$ on synthetic, race and taxi at each level in \Cref{fig:3lv-all}. The taxi data uses its full geography. In general, we see that no method dominates the other, but group size estimation methods based on $H_c$ generally perform better than $H_g$ and so are  a good default choice.  

\section{Conclusions and Future Work}\label{sec:future}
In this paper we introduced the differentially private \emph{hierarchical} \changes{\coco} histogram problem and presented a solution based on isotonic regression and optimal weighted matchings. This problem is motivated by a variety of tables that are published in truncated form in Summary File 1 of the 2010 U.S. Decennial Census. The actual tables include additional demographic characteristics that are attached to the household sizes at each level of geography. Such additional information greatly expands the dimensionality of the problem and is an area of future work.


\smallskip
\section{Acknowledgments}
We thank John Abowd for discussions on disclosure avoidance requirements for publishing Census tables. \changes{We thank the anonymous reviewers for their help comments}. The work was partially supported by NSF grants 1054389, 1544455, and 1702760. The views in this paper are those of the authors, and do not necessarily represent the Census Bureau.
\balance

\clearpage
\bibliographystyle{abbrv}
\bibliography{ref}  


\end{document}